\def\Pr{{\mathbb P}}
\def\eqdef{\triangleq}
\newtheorem{theorem}{Theorem}
\newtheorem{lemma}{Lemma}
\newtheorem{assumption}{Assumption}
\newcounter{example}[section]
  \newcounter{remark}[section]
 \newcounter{experiment}[section]
\newenvironment{experiment}[1][]{\refstepcounter{experiment}\par\medskip
   \noindent \textbf{Experiment~\theexperiment. #1} \rmfamily}{\medskip}
\begin{document}

\title{Asymptotic Miss Ratio of LRU Caching with Consistent Hashing}
\date{}
\author{Kaiyi Ji, Guocong Quan, Jian Tan\\
Department of Electrical and Computer Engineering\\
The Ohio State University, Columbus, 43210\\
              Email: \{ji.367, quan.72, tan.252\}@osu.edu}

\maketitle

\begin{abstract}
To efficiently scale data caching infrastructure to support emerging big data applications, many caching systems rely on consistent
hashing to group
a large number of servers to form a cooperative cluster.
These servers are organized together according to a random hash function.
They jointly provide a unified but distributed hash table to serve swift and voluminous
data item requests. 
Different from the single least-recently-used (LRU) server that has already been extensively studied, theoretically characterizing a cluster
that consists of multiple LRU servers
remains yet to be explored.
These servers are not simply added together; the random hashing
complicates the behavior.
To this end, we derive the asymptotic miss ratio of data item requests on a LRU cluster
with consistent hashing.
We show that these individual cache spaces on different servers  
can be effectively viewed as if
they could be pooled together 
to form a single virtual LRU cache space parametrized by an appropriate cache size.  This equivalence can be established rigorously
under the condition that the cache sizes of
the individual servers are large. For typical data caching systems this condition is common.  
Our theoretical framework provides a convenient abstraction that can directly apply the results from the simpler
single LRU cache to the more complex LRU cluster with consistent hashing.  
\end{abstract}

\section{Introduction}\label{s:intro} 

With the advent of cloud computing and emergence of big data,  scale-out data caching systems are widely deployed and their
horizontal scalability~\cite{singh2015survey} becomes increasingly important. As an effective solution, consistent hashing~\cite{Karger:1997} 
has been commonly used by key-value caching systems, e.g., Dynamo~\cite{dynamo}, Aerospike~\cite{aerospike}, Memcached~\cite{memcached}, Redis~\cite{redis}. Using consistent hashing,  a large
number of servers are organized together to form a cooperative cluster. These servers jointly provide a 
unified but distributed hash table to serve swift and voluminous
 data item requests.     
Once a data item is hashed to one of the hosting servers, most key-value caching systems use
the least-recently-used (LRU) caching algorithm, or its variations, e.g.,  LRU Clock~\cite{AndrewOS}, to
decide which data items should be kept in its own individual cache space. 
These data caching systems play a critical role in optimizing the way information is delivered in Web services. 


With consistent hashing, the total amount of cache spaces in the cluster
can be easily expanded (scaled horizontally) through the addition of new cache servers. 
However, these individual cache spaces on different servers are not simply added together to achieve an overall request miss ratio.  
Although LRU caching on a single server has already been extensively studied,
theoretically characterizing the miss ratio of a LRU cluster 
organized by consistent hashing
still remains an unexplored problem. 
One difficulty in analysis is that the data item requests are shuffled to a large number of servers according to a random hash function. 
This random hashing complicates the system behavior. 
Due to the fundamental role and predominant usage in practice, LRU caching with consistent hashing merits a deep investigation. 

Characterizing the cache miss behavior of a cluster in such a complex setting not only helps 
resource planning but also improves the way a cache cluster is organized.   To this end, we derive the asymptotic miss ratio of a LRU caching cluster with consistent hashing under the independent reference model (IRM)~\cite{vanichpun2004output}.  Interestingly, these individual cache spaces on different servers, though isolated physically but logically connected through a hash function,    
can be effectively viewed as if 
they could be pooled together to form a single virtual LRU cache space.  Interestingly, this virtual LRU cache has a equivalent cache 
size determined by the distribution of the random hash function.  This equivalence can be established when the cache sizes of 
the individual servers are large. 
Our result provides a convenient abstraction 
that can rigorously relate the more complex LRU caching with consistent hashing to the relatively simpler single LRU cache.  
Based on this abstraction, many known results on  
a single LRU cache can be directly translated to a LRU cluster with consistent hashing.  
Specifically, we prove a characteristic time approximation, previously established for a single LRU cache,  for a LRU cluster.  
Notably, the characteristic time approximation has the same form for almost all of the random hash functions. 
This result is not straightforward in view that the miss ratio of each 
of the server is a random variable conditional on the random hash function.  
However, the overall conditional asymptotic miss ratio of the cluster is always the same almost surely,  depending on the probability distribution of the random hash function.
Due to this equivalence, we comment that the engineering implications discussed in~\cite{jiantanSig} for a single LRU cache can also be extended to a LRU cluster.

%
%
%
%

\subsection{Background}
To put the analysis on a concrete basis,  we first summarize the important features of consistent hashing and LRU caching. 
\subsubsection{Consistent hashing}
Data items are usually organized in a key
value pair, and the entire data are stored in the whole cluster as a distributed
hash table according to keys. 
\begin{figure}[h]
\vskip -0.16in
\centering
\includegraphics[width=8.0cm]{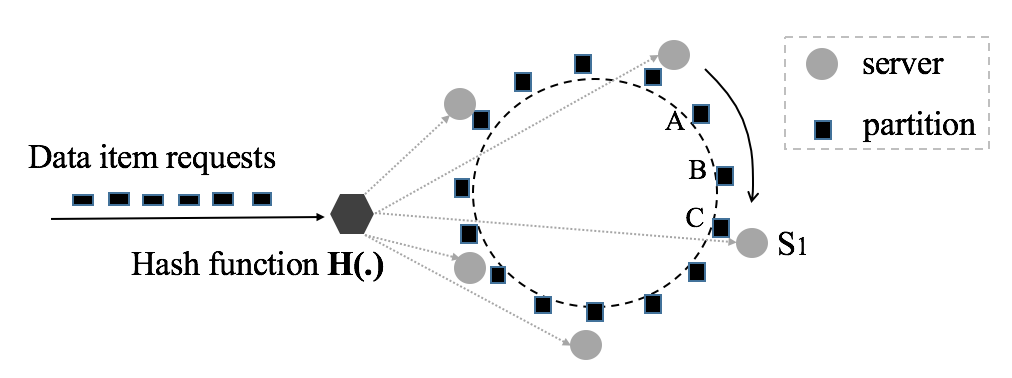}
\caption{Illustration of consistent hashing}
\label{fig:consisHashing}
\vspace{-0.2cm}
\end{figure}
To scale out the system horizontally, each server is maintained independent, e.g., by using consistent hashing~\cite{Karger:1997} to select a unique server for each key.
The basic idea is to first hash the 
data items to a large number of partitions at random.  These partitions form a ring, as illustrated in Fig.~\ref{fig:consisHashing}. The physical servers, with a total number that is much smaller than the number of partitions,  
are also hashed to a subset of the partitions. 
A data item, after being hashed to a partition,  will be stored on the server that is closest in the clockwise direction to its associated partition.  For example, partitions~A, B and C are all stored on server $S_1$ in Fig.\ref{fig:consisHashing}.  
Therefore, when the server locations are fixed, each server hosts a certain number of partitions that are determined by the hash function. 

\subsubsection{LRU caching}
Each data item is hashed to one of the hosting servers. A cache replacement algorithm is needed to manage the cache space 
on an individual server.  Due to the low cost of tracking access history, LRU caching algorithm has been widely used, e.g., for Memcached~\cite{memcached,memcachNSDI}.   A data item is added to the cache after
a client has requested it and failed.  When the cache is full, the LRU caching algorithm
evicts a data item that has not been used for the longest time in order to accommodate the newly requested one.

\subsection{Contributions of this paper}
\noindent (I) 
For a family $\mathcal{H}$ of hash functions under the Simple Uniform Hashing Assumption (SUHA)~\cite{cormen:2001}, we characterize the asymptotic miss ratio of data item requests on a cluster with consistent hashing. 
This asymptotic result, expressed as a conditional probability,  holds almost surely for all random hash functions in $\mathcal{H}$. 
It provides a new analytical framework to study LRU caching with consistent hashing by conditioning on the random hash function, which is an interesting feature 
that most existing asymptotic results do not have.  

\noindent (II)  
We rigorously establish a one-to-one equivalence between a cluster with consistent hashing and a single virtual LRU cache space with a proper size parametrized by the distribution of the random hash function. Conveniently, this equivalence translates the analytic results from the well-studied single server to the complex cluster with consistent hashing. 
Based on this equivalence, we prove the characteristic time approximation for a cluster to characterize the miss probability. 

\noindent(III) Extensive simulations show that our asymptotic results match with the empirical results accurately even for relatively small cache sizes. 

\subsection{Related work}
Consistent hashing has gained much popularity in recent years due to the increasing demand of processing large data sets on a scale-out 
infrastructure. It has been successfully used in a number of real-world applications, e.g., web caches~\cite{Karger:1999,Wang:1999:SWC}, peer-to-peer networks~\cite{stoica2003chord} and distributed storage systems~\cite{lakshman2010cassandra}.
Most theoretical studies on consistent hashing focus on characterizing the randomized partitioning algorithms to balance data allocation~\cite{schickinger2000simplified}, memory sharing~\cite{novakovic2015never} and perfect hashing~\cite{edelkamp2014planning}. These algorithms are usually analyzed under SUHA.
Some works circumvent this assumption using realistic hash functions through simulations~\cite{dietzfelbinger2009applications,pagh2008uniform}.
However, none of these works investigate the miss probabilities of a cluster with consistent hashing.

There is a large body of work on the miss ratio of a single LRU server. 
Different methods have been proposed, e.g.,  approximation by iterative algorithms~\cite{drudi2015approximating}, mean field analysis~\cite{gast2015transient} and the characteristic time approximation~\cite{fagin1977asymptotic,che2002}. 
To obtain insights, asymptotic results for Zipf's popularity distributions have been derived~\cite{Jelenkovic:2004,Jelenkovic99asymptoticapproximation,osogami2010fluid}.   
The characteristic time approximation is also a common approach, which has been shown to be accurate in practical applications~\cite{BergerGSC14,garetto2016unified}.  Its success has been supported by 
the analysis~\cite{Fricker:2012,roberts2013exploring}.
Nevertheless, these results cannot be directly extended to a cluster with consistent hashing. 
 
Although single LRU caches have been studied in depth, 
characterizing the miss behavior of cache networks with general topologies remains difficult. 
Instead, some existing works, e.g.,~\cite{shanmugam2013femtocaching,ioannidis2016adaptive},  focus on offline optimization problems (e.g., content placement) in cache networks.
Some specially structured cache networks (e.g., tree or line networks) have been studied~\cite{berger2014exact,Choungmo2014} using a TTL-based eviction scheme. 
With consistent hashing, the cache network can be viewed as a one-hop network, which is the focus of this paper. 

\section{Model description}\label{s:model}
Consider a cluster $\mathcal{C}$ with $N$ servers $\{S_1,S_2,\cdots,S_N\}$ organized by consistent hashing. 
Assume that the data item requests hosted on $\mathcal{C}$ can access an infinite number of distinct data items of unit size
that are represented by
a sequence $\left(d_i^{\circ},\,i=1,2,3,\cdots\right)$. 
The notation $H(d_i^{\circ})=m$
represents that a data item $d_i^{\circ}$, together with all requests that ask for $d_i^{\circ}$,  are hashed to server $m$ by the hash function $H(\cdot)$. 
Thus, a subsequence of data items, denoted by $( d_{i}^{(m)},i=1,2,\cdots )$,  are hashed to server $S_m$, selected from $\left( d_i^{\circ},\,i=1,2,\cdots\right)$.

To characterize the hash function $H$, we assume the Simple Uniform Hashing Assumption (SUHA)
to facilitate the analysis. In practice, the ideal SUHA property is not feasible, and people resort to a (strongly) universal 
hash family~\cite{Carter:1977} or a $k$-independent hash family~\cite{Wegman79} as approximations. 
Actually, it has been shown that 2-independent hash
functions under mild conditions approximate truly random functions~\cite{Mitzenmacher:2008}.
Specifically, we consider a family of 
hash functions $\mathcal{H}=\{h_w(\cdot), w=1, 2, \cdots \}$. Assume that, with $H$ chosen 
from $\mathcal{H}$ uniformly at random, 
 each data item $d^{\circ}_i, i=1,2, \cdots$ is dispatched also uniformly at random to one of the partitions. 
 Note that although $H(\cdot)$ is random, it becomes one of the deterministic hash function $h_w(\cdot)$ once the random selection is completed.
Using consistent hashing, the involved servers store the data items from mutually exclusive subsets of the partitions, as shown in Fig.~\ref{fig:consisHashing}. 
Since the number of partitions assigned to each of the servers could be different,  
we can equivalently assume that each data item $d^{\circ}_i$ is independently hashed to server $S_m, 1\leq m\leq N$ with probability $\mu_m$ 
by $H(\cdot)$.
In other words,  we have the following assumption. 

\begin{assumption}(SUHA)\label{assump:1}
$H(d^{\circ}_i), i=1, 2, \cdots$ are independent random variables with $\Pr[ H(d^{\circ}_i) = m ]=\mu_m, 1\leq m \leq N$.   
\end{assumption}


Assume that the arrivals of the data item requests
occur at time points $\{\tau_n,-\infty<n<+\infty\}$. Let $J_n$ be the index of the server for the request at time $\tau_n$.  The event $\{J_n=m\}$ represents that the request at time $\tau_n$ is processed on server $S_m$. 
Denote by $R_n$ the requested data item at time $\tau_n$. Thus, the event $\{J_n=m,R_n=d_{i}^{(m)}\}$ means that the request at time $\tau_n$ is to fetch data item $d_{i}^{(m)}$ on server $S_m$.  In order to compute the miss ratio when the system reaches stationarity, we consider the request at time $\tau_0$. It has been shown~\cite{Jelenkovic99asymptoticapproximation}  that the miss ratio 
is equal to the probability that the data item requested by $R_0$ is not in the cache.  
For cluster $\mathcal{C}$, define
\begin{align}
&\Pr\left[R_0=d^{(m)}_{i} {\big |} J_0=m, H \right] = q^{(m)}_{i}, i=1, 2,3,  \cdots\label{eq:conditional}
\\&\Pr\big[R_0=d^{\circ}_{i}\,\big] = q^{\circ}_{i}, i=1, 2, 3,  \cdots\label{eq:unconditional}
\end{align}
Note that $\left(q^{(m)}_{i},i\geq 1\right)$ is a random sequence determined by the random hash function $H$. 
We assume that the data items $\left( d_i^{\circ},i\geq 1\right)$ are sorted such that the sequence $\left( q_{i}^{\circ},\,i\geq 1\right)$ 
is non-increasing with respect to $i$. Since $\left( d_i^{(m)},i\geq 1\right)$ is a random subsequence of $\left( d_i^{\circ},\,i\geq 1\right)$, 
$\left( q_i^{(m)},i\geq 1\right)$ is also non-increasing by this ordering. 
Let $\left(d^{\circ}_{m_i}, i\geq 1\right) \equiv \left( d_i^{(m)},i\geq 1\right)$, which represents the subsequence of $\left( d_i^{\circ}, i \geq 1\right)$ that is hashed to server $S_m$ by $H$. 
Therefore, $\Pr[J_0=m {\big |} H ]=\sum_{i=1}^{\infty}q_{m_i}^{\circ}$. For notational convenience in our proofs, we define $W_m=1/\Pr[J_0=m {\big |} H ]$. 
On server $S_m$, we have $q_i^{(m)}=W_mq_{m_i}^{\circ},\,i=1,2\cdots$. 
We emphasize that $W_m$ is a random variable  determined by the random hash function $H$, 
which normalizes $Q_m=\left( q_{m_i}^{\circ}, i\geq 1\right)$ to be a legitimate distribution. 

The data item popularity is assumed to follow a Zipf's distribution $q_i^{\circ}\sim c_{\circ}/i^{\alpha_{\circ}}$.
This is a typical distribution that has been empirically observed in web pages~\cite{lee99}, content-centric network~\cite{fricker2012impact}, and video systems~\cite{cha2007tube}. 
To simplify the analysis, this paper only considers a Zipf's distribution with 
$\alpha_{\circ}>1$.  For $\alpha_{\circ}<1$, we can conduct a similar analysis based on existing results~\cite{jelenkovic2007lru,guocongINFO,berthet2017approximation}.


LRU is equivalent to the move-to-front (MTF) policy~\cite{Jelenkovic99asymptoticapproximation,fill:1996},  
which sorts the data items in an increasing order of their last access time.
When a data item is requested under MTF, it is moved to the first position of the 
list and all the other data items that were in front of this one increase their positions by one.  
Define $C_n$ to be the position of the data item requested by $R_n$  in the sorted list under MTF on the server that processes the
request $R_n$.
Then, 
the miss probability of the requests on server $S_m$ with a cache size $x_m$ is given by $\Pr[ C_0 > x_m   {\big |} J_0=m, H]$, 
which is conditional on the random hash function $H$ and the event that $R_0$ occurs on server $S_m$, i.e., $J_0=m$. 
Combining the miss ratios of the servers, we obtain the overall miss probability of the cluster $\mathcal{C}$, conditional on $H\in \mathcal{H}$,
\begin{align}\label{eq:misscluster}
\Pr_{miss}^{\,\,\mathcal{C},H}=\sum_{m=1}^{N}\Pr\left[ C_0 > x_m   {\big |}J_0=m, H\right]\Pr\left[ J_0=m {\big |} H\right].
\end{align}
In the analysis, we assume $x_m=b_mx, 1\leq m \leq N$ for $x>0$. 

\section{Main results}\label{se:main}
In this section, we first derive the miss ratio for each of the servers of the LRU cluster with consistent hashing
conditional on the random hash function. 
Then, we show that these individual cache servers 
can be regarded as a single virtual LRU server with a proper cache size. 
This connection also proves the characteristic time approximation for a cluster.

\subsection{Asymptotic miss ratio under random hashing}\label{se:A}


We derive the miss probabilities for the servers of the cluster
 by conditioning on the random hash function $H$. Note that $H$ uniquely determines $W_m, 1\leq m \leq N$.
The gamma function is given by  $\Gamma(\alpha+1)=\int_{0}^{\infty}y^{\alpha}e^{-y}dy$. 
The notation $f(x)\sim g(x)$ means $\lim_{x\rightarrow\infty} f(x)/g(x)=1$.
\begin{theorem}\label{theo:1}
Under the assumptions in Section~\ref{s:model}, 
 we obtain,  for all $1\leq m \leq N$, almost surely for all $H$,  as $x_m\rightarrow\infty$, 
\begin{small}
\begin{align}\label{main:1}
\Pr\left[C_0>x_m\big | J_0=m, H \right]\sim \frac{\left(\mu_m\Gamma(1-1/\alpha_\circ)\right)^{\alpha_\circ}c_\circ W_m}{\alpha_\circ x_m^{\alpha_\circ-1}}, 
\end{align} 
\end{small}
which implies, almost surely for all $H$, 
\begin{align}\label{main:2}
\Pr_{miss}^{\,\,\mathcal{C}, H} \sim \sum_{m=1}^{N}  \frac{ \mu_m^{\alpha_\circ} \Gamma(1-1/\alpha_\circ)^{\alpha_\circ}c_\circ }{\alpha_\circ x_m^{\alpha_\circ-1}}.
\end{align} 
\end{theorem}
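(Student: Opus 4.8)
The plan is to reduce Theorem~\ref{theo:1} to the known asymptotic miss ratio of a \emph{single} LRU server, and then to obtain the cluster formula~\eqref{main:2} from~\eqref{main:1} by the identity $\Pr[J_0=m\mid H]=1/W_m$. Fix a server index $m$. Conditionally on $H$, the substream of requests with $J_0=m$ is, under IRM, an i.i.d.\ stream over the items $(d^{(m)}_i)_{i\ge1}$ with probabilities $q^{(m)}_i=W_m q^{\circ}_{m_i}$ as in~\eqref{eq:conditional}, and the MTF configuration of $S_m$ depends only on this substream; by the LRU/MTF equivalence recalled in Section~\ref{s:model}, $\Pr[C_0>x_m\mid J_0=m,H]$ is exactly the stationary move-to-front search-cost tail (equivalently, the LRU fault probability) of a single cache of size $x_m$ fed by this stream. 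So everything hinges on the tail of $(q^{(m)}_i)_i$.

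To obtain that tail I would use Assumption~\ref{assump:1}: the indicators $\xi_j\eqdef\ind\{H(d^{\circ}_j)=m\}$ are i.i.d.\ Bernoulli$(\mu_m)$, and the index $m_i$ in the original list of the $i$-th item routed to $S_m$ (as in Section~\ref{s:model}) is the renewal time $m_i=\min\{k:\sum_{j\le k}\xi_j=i\}$, so by the strong law $m_i/i\to1/\mu_m$ almost surely. Since $q^{\circ}_k\sim c_\circ k^{-\alpha_\circ}$ is a power law and $m_i\to\infty$, composition gives $q^{\circ}_{m_i}\sim c_\circ(\mu_m/i)^{\alpha_\circ}$, i.e. $q^{(m)}_i\sim\bigl(W_m c_\circ\mu_m^{\alpha_\circ}\bigr)\,i^{-\alpha_\circ}$ almost surely. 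One also records that $1/W_m=\Pr[J_0=m\mid H]=\sum_i q^{\circ}_{m_i}=\sum_j\xi_j q^{\circ}_j$ converges almost surely to $\mu_m$ by Kolmogorov's SLLN (the variances $\mathrm{Var}(\xi_j q^{\circ}_j)\le(q^{\circ}_j)^2$ are summable because $\alpha_\circ>1$), so $W_m$ is a.s.\ finite and positive, and the right-hand side of~\eqref{main:1} is well defined.

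Finally, for almost every $H$ the sequence $(q^{(m)}_i)_i$ is non-increasing, sums to one, and has a Zipf tail with effective constant $c^{(m)}\eqdef W_m c_\circ\mu_m^{\alpha_\circ}$ and exponent $\alpha_\circ>1$; invoking the single-server LRU/MTF asymptotic for Zipf popularities with exponent larger than one (cf.~\cite{Jelenkovic99asymptoticapproximation,Jelenkovic:2004}) gives $\Pr[C_0>x_m\mid J_0=m,H]\sim c^{(m)}\Gamma(1-1/\alpha_\circ)^{\alpha_\circ}/(\alpha_\circ x_m^{\alpha_\circ-1})$, which is precisely~\eqref{main:1}. Substituting this and $\Pr[J_0=m\mid H]=1/W_m$ into~\eqref{eq:misscluster} cancels $W_m$ in each of the $N$ summands and yields~\eqref{main:2}; the conclusion remains almost sure since only finitely many almost-sure events are intersected, and term-by-term equivalence of finitely many positive summands of common order $x^{-(\alpha_\circ-1)}$ (recall $x_m=b_m x$) upgrades to equivalence of the sum. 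The main obstacle is this last step: the single-server theorems are phrased for a deterministic popularity law, whereas $(q^{(m)}_i)_i$ is random through $H$. I would resolve this by restricting to the almost-sure event on which $m_i/i\to1/\mu_m$ and $0<W_m<\infty$ — there the server-$m$ law is a fixed regularly varying distribution to which the deterministic theorem applies — and by noting that the request-stream randomness is independent of $H$, so a Fubini/conditioning argument makes the conditional statement rigorous; one also has to check that the single-server leading constant depends only on the tail of the popularity law and not on its (constrained) total mass.
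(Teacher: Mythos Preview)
Your argument is correct and takes a genuinely shorter route than the paper. You reduce everything to the renewal-type statement $m_i/i\to 1/\mu_m$ a.s.\ (SLLN for the i.i.d.\ Bernoulli$(\mu_m)$ indicators), compose with $q^\circ_k\sim c_\circ k^{-\alpha_\circ}$ to get $q^{(m)}_i\sim (W_m c_\circ\mu_m^{\alpha_\circ})\,i^{-\alpha_\circ}$ for a.e.\ $H$, and then, on that full-measure event, apply the single-server Zipf asymptotic of \cite{Jelenkovic99asymptoticapproximation} to the now \emph{deterministic} sequence $(q^{(m)}_i)$. The paper instead reproves the single-server machinery from scratch (its Lemma~\ref{le:1}, stated in terms of $q_i/q_{i+1}$, $\Phi$, $T$) and then verifies each hypothesis holds almost surely via Bernstein concentration bounds with tails $O(1/n^2)$ (Lemmas~\ref{qiqiplus1}, \ref{phi_m}, \ref{le:tm}) and Borel--Cantelli. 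Your approach is more elementary and directly addresses the paper's stated ``obstacle~1'' that the per-server popularity is random: fixing $H$ on the SLLN event makes the sequence deterministic and regularly varying, so the classical theorem applies verbatim. What the paper's heavier route buys is that its intermediate lemmas (in particular the explicit control of $T_m$ and $\Phi_m$) are reused to prove the characteristic-time approximation for the cluster (Theorem~\ref{approx:che}); your shortcut yields Theorem~\ref{theo:1} but would not by itself supply those pieces.

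One small slip, not affecting the proof: $1/W_m=\sum_j\xi_j q^\circ_j$ does \emph{not} equal $\mu_m$ almost surely. It is a genuine random variable with mean $\mu_m$ and strictly positive variance $\mu_m(1-\mu_m)\sum_j(q^\circ_j)^2$; Kolmogorov's three-series theorem gives a.s.\ convergence of the series, not to a constant. What you actually need---that $W_m\in(0,\infty)$ a.s.---follows trivially from $0<1/W_m\le\sum_j q^\circ_j=1$ together with $\Pr[\xi_j=0\ \forall j]=0$.
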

\begin{proof}
The proof is presented in Section~\ref{proof_th1}.
\end{proof}
This asymptotic result in (\ref{main:1}) involves random variables $W_m,1\leq m \leq N$ that are determined by $H$. Interestingly, 
the overall asymptotic miss ratio of the whole cluster in~(\ref{main:2}) is independent of $H$ since $W_m=1/\Pr[J_0=m {\big |} H ]$. These asymptotic results hold a.s. for all $H$. See Experiments in Section~\ref{s:simu}. 
If there is only one server in the cluster $\mathcal{C}$, i.e., $N=1$, Theorem~\ref{theo:1} 
reproduces the results in~\cite{Jelenkovic:2004,Jelenkovic99asymptoticapproximation} for a Zipf's distribution, e.g.,  Theorem~3 of~\cite{Jelenkovic99asymptoticapproximation} on an asymptotic miss probability of a single LRU server. However, extending this result
from a single server to a cluster is complicated. We discuss two main issues that cause the difficulty: 1) Theorem~3 of~\cite{Jelenkovic99asymptoticapproximation} assumes a deterministic popularity distribution on a server.  This condition 
is not satisfied in our model due to the random hash function; 
2) the 
proofs of~\cite{Jelenkovic:2004,Jelenkovic99asymptoticapproximation} cannot be used to prove the characteristic time approximation for a cluster. 
Because of these reasons, we use a different approach to derive the miss probability of a LRU cluster with consistent hashing,
which also proves the characteristic time approximation for a cluster. 
Now, suppose that we have a single virtual LRU cache server of size $\bar{x}$ that serves the entire data item requests $\{R_n\}$, which
at the same time are also served on the cluster $\mathcal{C}$. 
Based on 
Theorem~\ref{theo:1}, we establish an equivalence between the cluster $\mathcal{C}$ and the virtual LRU cache. 
Denote by $\Pr\left[ C_0>\bar x \,\, \right ]$ the miss probability of the virtual LRU cache conditional on $H$. 
Recall that the server $S_m$ has a cache capacity $x_m=b_mx$.  
\begin{theorem}\label{theo:3}
Under the assumptions in Section~\ref{s:model}, we obtain, almost surely for all $H$,
\begin{align}\label{hasheq}
\Pr_{miss}^{\,\,\mathcal{C}, H}\sim \Pr\left[ C_0>\bar x  \right ], \; \textrm{as}\; x\to \infty,
\end{align}
 where
\begin{align}\label{ineq:67}
\bar x=x\left(\sum_{m=1}^{N}\mu_m^{\alpha_\circ}b_m^{1-\alpha_\circ}\right)^{-1/(\alpha_\circ-1)}.
\end{align} 
\end{theorem}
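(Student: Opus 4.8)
The plan is to deduce Theorem~\ref{theo:3} directly from Theorem~\ref{theo:1} by matching leading-order coefficients; no new probabilistic input is needed. First I would substitute the scaling $x_m=b_m x$ into the cluster asymptotic~(\ref{main:2}) and factor out the common power of $x$, obtaining that, almost surely for all $H$,
\begin{align}\label{skproof:1}
\Pr_{miss}^{\,\,\mathcal{C},H}\sim \frac{\Gamma(1-1/\alpha_\circ)^{\alpha_\circ}c_\circ}{\alpha_\circ}\,x^{1-\alpha_\circ}\sum_{m=1}^{N}\mu_m^{\alpha_\circ}b_m^{1-\alpha_\circ},\qquad x\to\infty.
\end{align}
Note that under the standing assumption $\alpha_\circ>1$ the quantity $\bar x$ in~(\ref{ineq:67}) is a well-defined, strictly positive constant times $x$.

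Next I would treat the virtual LRU cache. It serves the full request stream $\{R_n\}$, whose popularity law is exactly $(q_i^\circ,\,i\ge 1)$ with $q_i^\circ\sim c_\circ/i^{\alpha_\circ}$; this is precisely the $N=1$ instance of the model, in which $\mu_1=1$, $W_1=1$, and no hash-induced randomization is present. Applying~(\ref{main:1}) in this degenerate case (equivalently, Theorem~3 of~\cite{Jelenkovic99asymptoticapproximation}) with cache size $\bar x$ gives
\begin{align}\label{skproof:2}
\Pr[C_0>\bar x]\sim \frac{\Gamma(1-1/\alpha_\circ)^{\alpha_\circ}c_\circ}{\alpha_\circ}\,\bar x^{1-\alpha_\circ},\qquad \bar x\to\infty.
\end{align}
Since $\bar x$ is a fixed positive multiple of $x$, the limit $\bar x\to\infty$ is equivalent to $x\to\infty$, so~(\ref{skproof:2}) is legitimately invoked alongside~(\ref{skproof:1}).

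Finally I would equate the right-hand sides of~(\ref{skproof:1}) and~(\ref{skproof:2}): the equivalence~(\ref{hasheq}) holds exactly when $\bar x^{1-\alpha_\circ}=x^{1-\alpha_\circ}\sum_{m=1}^{N}\mu_m^{\alpha_\circ}b_m^{1-\alpha_\circ}$, i.e. when $\bar x$ is given by~(\ref{ineq:67}); replacing $\bar x$ by $\lfloor\bar x\rfloor$ to obtain an integer cache capacity perturbs $\bar x^{1-\alpha_\circ}$ by a factor tending to $1$ and is therefore harmless. The almost-sure-over-$H$ qualifier is inherited verbatim from Theorem~\ref{theo:1}; in fact both limiting constants above are deterministic (independent of $H$), so~(\ref{hasheq}) holds on the same probability-one set of hash functions on which Theorem~\ref{theo:1} holds. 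I do not expect a genuine obstacle here: the only point deserving care is confirming that the single-server asymptotic applies to the virtual cache, which is immediate because the aggregate stream has the required Zipf popularity $(q_i^\circ)$ and carries no $W$-type normalization factor; everything else is elementary coefficient matching.
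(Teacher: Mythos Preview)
Your proposal is correct and follows essentially the same approach as the paper: invoke Theorem~\ref{theo:1} (specifically~(\ref{main:2}) with $x_m=b_mx$) for the cluster, invoke the single-server Zipf asymptotic (Theorem~3 of~\cite{Jelenkovic99asymptoticapproximation}, equivalently the $N=1$ case of~(\ref{main:1})) for the virtual cache, and match the two power-law expressions via~(\ref{ineq:67}). Your additional remarks on the integer rounding of $\bar x$ and the inheritance of the almost-sure-over-$H$ qualifier are valid refinements but not needed beyond what the paper does.
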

\begin{proof}
The proof is presented in Section~\ref{proof_th2}.
\end{proof}
This theorem shows that the miss probability on the cluster $\mathcal{C}$ is asymptotically equal to the miss ratio of a LRU server 
with the cache size given by~(\ref{ineq:67}). Interestingly, as illustrated in Experiment~\ref{ex:2}, this asymptotic equivalence is accurate even when the cache size of each individual server of cluster $\mathcal{C}$ is relatively small.
Using this connection,  existing results and insights that have been established 
for a single server seem to be also true for a LRU cluster with consistent hashing.  This could be useful for resource planning and cluster optimization.  

\subsection{ Characteristic time approximation with consistent hashing}\label{sec:che}
The characteristic time approximation~\cite{che2002} has been widely used in estimating the miss ratio of a LRU server. Based on the connection between a cluster and a single virtual LRU server established by Theorem~\ref{theo:3}, 
we derive the characteristic time approximation for a cluster.  Recall that the server $S_m$ has a cache capacity $x_m=b_mx, 1\leq m \leq N$. 

 Theorem~\ref{theo:1} shows that, although the miss ratio of each server is random, determined by $H$,  the overall asymptotic miss ratio of the cluster
is independent of $H$. This interesting result motivates us to define
the characteristic time approximation for the cluster~$\mathcal{C}$ 
\begin{align}\label{che}
\Pr_{\textrm{CT}}\left[ C_0>\bar x \right]=\sum_{i=1}^{\infty}q_i^{\circ}e^{-q_i^{\circ}t_C},
\end{align}
where $\bar x$ is given by~(\ref{ineq:67}) and $t_C$ is the unique solution of the equation $\sum_{i=1}^{\infty}(1-e^{-q_i^{\circ}t_C})=\bar x$. 
Under the assumptions in Section~\ref{se:A}, we prove that 
\begin{theorem}\label{approx:che}
Under the assumptions of Theorem~\ref{theo:3},  we have, almost surely for all $H$, 
\begin{align}\label{cheche}
\Pr_{\text{CT}}\left[ C_0>\bar x \right]\sim \Pr_{miss}^{\,\,\mathcal{C},H}, \;\; \text{as} \; \bar x\rightarrow\infty.
\end{align}
\end{theorem}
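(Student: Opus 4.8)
The plan is to reduce Theorem~\ref{approx:che} to a purely deterministic asymptotic identity. The first observation is that neither $\bar x$ from~(\ref{ineq:67}) nor the estimate $\Pr_{\text{CT}}[C_0>\bar x]=\sum_{i\ge 1}q_i^\circ e^{-q_i^\circ t_C}$ depends on the random hash function $H$: both are built only from $(q_i^\circ)_{i\ge1}$ and from $(\mu_m,b_m)_{1\le m\le N}$. Hence the ``almost surely for all $H$'' clause is automatic once we show that the (nonrandom) leading term of $\Pr_{miss}^{\mathcal C,H}$ in~(\ref{main:2}) agrees asymptotically with $\Pr_{\text{CT}}[C_0>\bar x]$. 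Concretely, I would prove
\[
\Pr_{\text{CT}}[C_0>\bar x]\ \sim\ \frac{c_\circ\,\Gamma(1-1/\alpha_\circ)^{\alpha_\circ}}{\alpha_\circ\,\bar x^{\alpha_\circ-1}},\qquad \bar x\to\infty,
\]
which is exactly the statement that the characteristic-time approximation is asymptotically exact for a \emph{single} LRU cache of size $\bar x$ fed by the popularities $(q_i^\circ)$; substituting~(\ref{ineq:67}) and $x_m=b_mx$ and comparing with~(\ref{main:2}) then closes the argument, since $\bar x\to\infty$ iff $x\to\infty$. (Equivalently, one may route through Theorem~\ref{theo:3}, combining $\Pr_{miss}^{\mathcal C,H}\sim\Pr[C_0>\bar x]$ with the single-cache asymptotic for $\Pr[C_0>\bar x]$, which is the $N=1$ case of Theorem~\ref{theo:1}.)

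For the first step, the asymptotics of the characteristic time $t_C$: the map $g(t)=\sum_{i\ge1}(1-e^{-q_i^\circ t})$ is continuous, strictly increasing on $[0,\infty)$, with $g(0)=0$ and $g(\infty)=\infty$ (infinitely many $q_i^\circ>0$), so $t_C=t_C(\bar x)$ is uniquely defined and $t_C\to\infty$ as $\bar x\to\infty$. Using $q_i^\circ\sim c_\circ i^{-\alpha_\circ}$ and comparing the sum with $\int_0^\infty(1-e^{-c_\circ t\,y^{-\alpha_\circ}})\,dy$ (licit because $y\mapsto 1-e^{-c_\circ t y^{-\alpha_\circ}}$ is monotone, so the Riemann-sum error is at most one term), rescaling $y=(c_\circ t)^{1/\alpha_\circ}s$, and evaluating $\int_0^\infty(1-e^{-s^{-\alpha_\circ}})\,ds=\Gamma(1-1/\alpha_\circ)$ (integrate by parts after $v=s^{-\alpha_\circ}$), one gets $g(t)\sim\Gamma(1-1/\alpha_\circ)(c_\circ t)^{1/\alpha_\circ}$. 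Inverting at $g(t_C)=\bar x$ yields $c_\circ t_C\sim\bigl(\bar x/\Gamma(1-1/\alpha_\circ)\bigr)^{\alpha_\circ}$.

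For the second step, the asymptotics of $\Pr_{\text{CT}}$: the same integral comparison applied to $\sum_{i\ge1}q_i^\circ e^{-q_i^\circ t_C}\approx\int_0^\infty c_\circ y^{-\alpha_\circ}e^{-c_\circ t_C y^{-\alpha_\circ}}\,dy$, with the substitution $y=(c_\circ t_C)^{1/\alpha_\circ}s$ and the elementary identity $\int_0^\infty s^{-\alpha_\circ}e^{-s^{-\alpha_\circ}}\,ds=\Gamma(1-1/\alpha_\circ)/\alpha_\circ$, gives $\Pr_{\text{CT}}[C_0>\bar x]\sim c_\circ(c_\circ t_C)^{1/\alpha_\circ-1}\Gamma(1-1/\alpha_\circ)/\alpha_\circ$. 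Inserting $c_\circ t_C$ from the first step produces the displayed asymptotic $c_\circ\Gamma(1-1/\alpha_\circ)^{\alpha_\circ}/(\alpha_\circ\bar x^{\alpha_\circ-1})$, and then substituting $\bar x$ from~(\ref{ineq:67}) gives $\Pr_{\text{CT}}[C_0>\bar x]\sim\frac{c_\circ\Gamma(1-1/\alpha_\circ)^{\alpha_\circ}}{\alpha_\circ x^{\alpha_\circ-1}}\sum_{m=1}^N\mu_m^{\alpha_\circ}b_m^{1-\alpha_\circ}$, which is precisely the right-hand side of~(\ref{main:2}) with $x_m=b_mx$. Hence $\Pr_{\text{CT}}[C_0>\bar x]\sim\Pr_{miss}^{\mathcal C,H}$ by transitivity of $\sim$.

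I expect the main obstacle to be making the two ``sum $\approx$ integral'' steps rigorous with error terms uniform as $t_C\to\infty$. This requires (i) absorbing the correction in $q_i^\circ\sim c_\circ i^{-\alpha_\circ}$ by splitting each sum into a head $i\le K$ and a tail $i>K$: in $\Pr_{\text{CT}}$ the head terms $q_i^\circ e^{-q_i^\circ t_C}$ decay exponentially and are negligible, while in $g(t)$ the head contributes at most $K=o(t^{1/\alpha_\circ})$, and on the tail the relative error is $<\varepsilon$ for $K$ large, after which one lets $\varepsilon\downarrow0$; and (ii) controlling the Riemann-sum (Euler–Maclaurin) error on the tail, which by monotonicity of the summand is dominated by a single term of lower order. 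One also needs the two Gamma-integral evaluations (routine integration by parts) and the fact that inverting the asymptotic equivalence for $g$ is legitimate because $g$ is monotone and regularly varying. No randomness enters any of these steps, so the conclusion holds for every $H$, a fortiori almost surely.
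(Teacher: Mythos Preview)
Your proposal is correct and reaches the same closed form as the paper, namely
\[
\Pr_{\text{CT}}[C_0>\bar x]\ \sim\ \frac{c_\circ\,\Gamma(1-1/\alpha_\circ)^{\alpha_\circ}}{\alpha_\circ\,\bar x^{\alpha_\circ-1}},
\]
after which both arguments finish by matching against~(\ref{main:2}) (equivalently, invoking Theorem~\ref{theo:3}). The route, however, is genuinely different. The paper does not compute $\sum_i q_i^\circ e^{-q_i^\circ t_C}$ by a direct Riemann-sum/integral comparison. Instead it recycles the $\Phi/T$ machinery of Lemma~\ref{le:1}: it first shows $t_C\sim T^{\leftarrow}(\bar x)$, then sandwiches $\Pr_{\text{CT}}$ between quantities of the form $\Pr[\sigma>n]$ (lower bound via $e^{-q}\ge 1-q$, upper bound via the same head/tail splitting used in~(\ref{shengnan})--(\ref{ineq:76})), and reads off $\Pr_{\text{CT}}\sim\Gamma(\beta+1)/\Phi(T^{\leftarrow}(\bar x))$ before specializing to Zipf. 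Your direct integral approach is more elementary and self-contained for the pure-Zipf case; the paper's approach buys reuse of results already proved for Theorem~\ref{theo:1} and would generalize more readily beyond exact power laws. One small caveat on your side: in the second step the summand $i\mapsto q_i^\circ e^{-q_i^\circ t_C}$ is unimodal rather than monotone in $i$, so the ``Riemann-sum error $\le$ one term'' remark should be replaced by splitting at the peak (two monotone pieces, error bounded by two terms), which is still of lower order.
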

\begin{proof}
The proof is presented in  Section~\ref{proof:theo3}.
\end{proof}

\section{Simulations}\label{s:simu}
In this section, we conduct extensive simulations using C++ to verify the main results in Section~\ref{se:main}.  
Notably, all simulations match with our theoretical results even for relatively small cache sizes. 

\begin{experiment} \label{ex:1}
This experiment verifies Theorem~\ref{theo:1}.  Consider a cluster of $100$ heterogeneous servers $\{S_1,S_2,\cdots, S_{100}\}$ that have distinct cache sizes and different hashing probabilities.
The server $S_m,\,1\leq m\leq 100$ has a  cache capacity $x_m=\left(1+ 0.1 z_m  \right) x$ with $z_m$ selected uniformly at random from $[-0.5,0.5]$. Thus, $x$ is the average cache size across all of the servers.
Recall that $\mu_m$ is the probability that a data item is hashed to server $S_m$.
\begin{figure}[h] 
\vspace{-0.2cm}
\centering
\includegraphics[width=7cm]{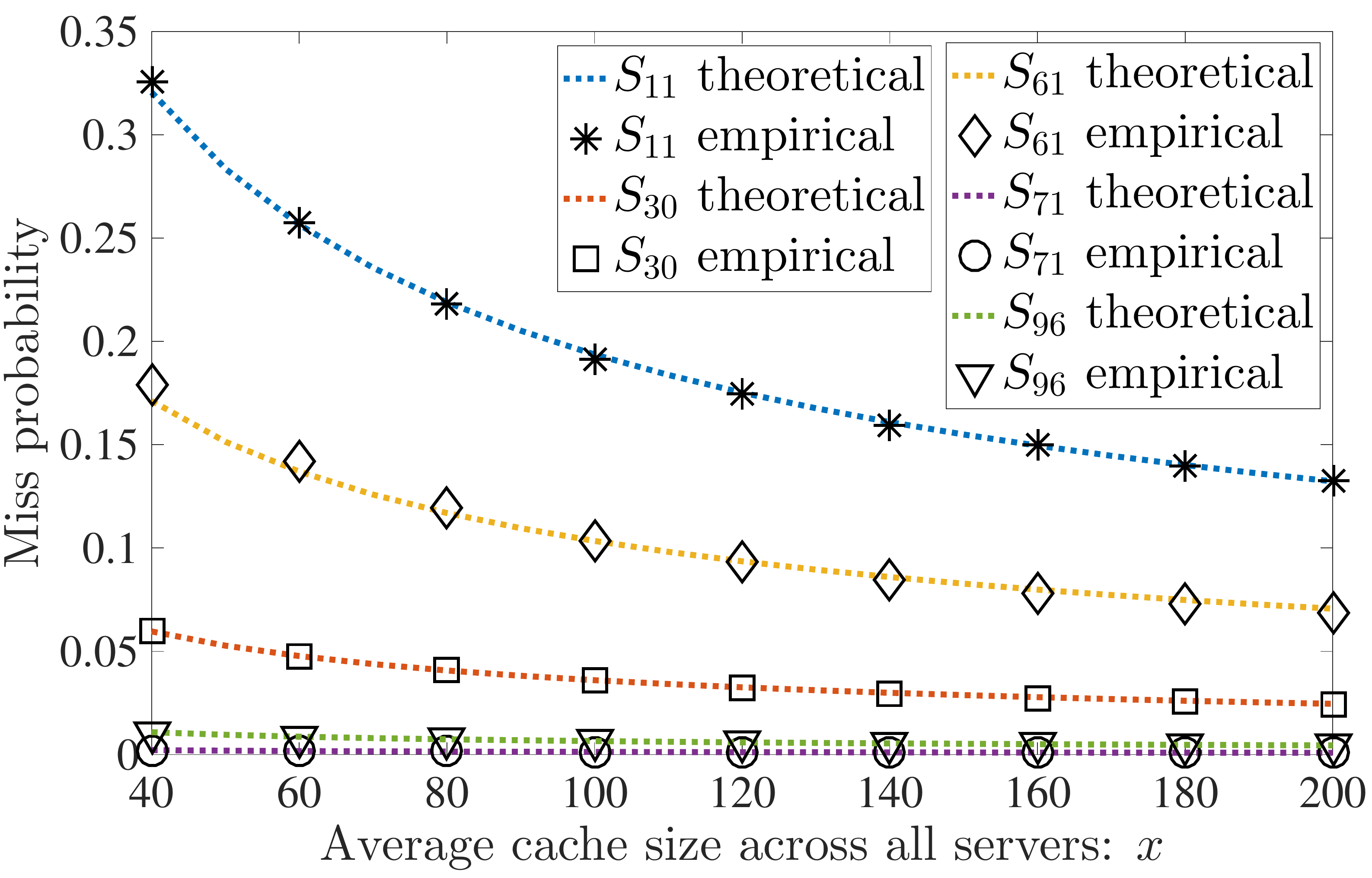}\vspace{-0cm}
\caption{Miss probabilities of five selected servers} \label{fig:ex1}
\vspace{-0.1cm}
\end{figure}
 Let 
$\mu_m=\left( 0.1+0.05 \lfloor (m-1)/20 \rfloor   \right)/20$ for $1\leq m\leq 100$.
 Conditional on $H$, we obtain random variables $W_m, 1\leq m\leq 100$ with $W_m=1/\sum_{i=1}^{\infty}q_{m_i}^{\circ}$. 
Set the total number of data items $M=10^7$, and the popularity distribution $q_i^{\circ}=c_\circ/i^{\alpha_\circ},1\leq i\leq M$ with $\alpha_\circ=1.55, c_\circ=1/(\sum_{i=1}^{M}i^{-\alpha_\circ})= 0.4109$.  For each $x \in \{40,60,80,100,120,140,160,180,200\}$, we first
simulate $10^8$ requests to ensure that the entire system reaches stationary, and then $10^9$ more requests 
to compute the empirical miss probabilities of the cluster $\mathcal{C}$ and the individual servers. To verify~(\ref{main:1}), we need to show
that it holds for all $1\leq m\leq 100$. To visualize the results,  
we only plot the miss probabilities of five servers $\{S_{11},S_{30},S_{61},S_{71},S_{96}\}$ in Fig.~\ref{fig:ex1}.
The empirical results match well with the the theoretical results by~(\ref{main:1}) and~(\ref{main:2}) even when $x$ is small.
\end{experiment}

\begin{experiment}\label{ex:2}
This experiment verifies the equivalence between the cluster $\mathcal{C}$ and a virtual LRU cache
described in Theorem~\ref{theo:3}, by using the same setting as in Experiment~\ref{ex:1}. 
\begin{figure}[h] 
\vspace{-0.2cm}
\centering
\includegraphics[width=8cm]{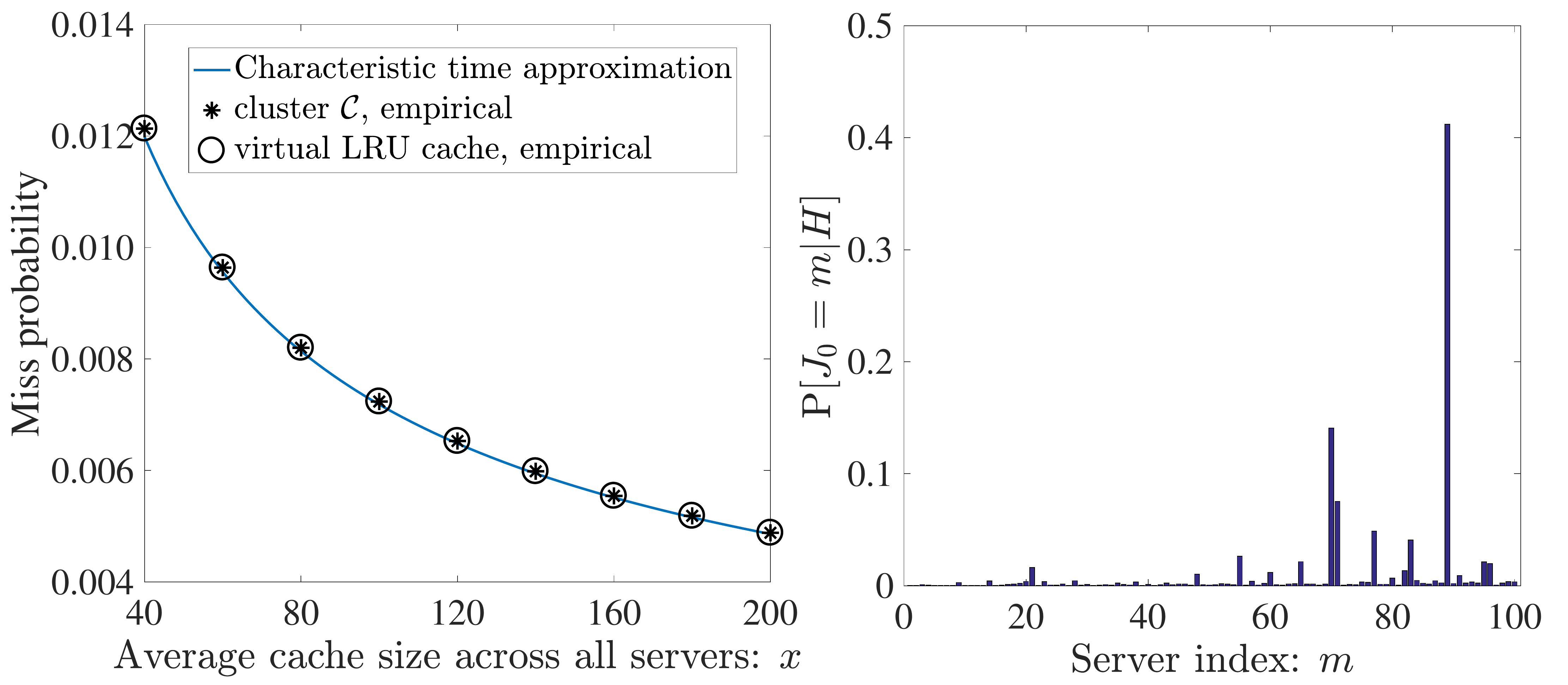}\vspace{-0cm}
\caption{[Left] Miss ratios of the cluster $\mathcal{C}$, a virtual LRU cache and the characteristic time approximation. [Right] The probability to hash a request to server $S_m$ conditional on~$H$} \label{fig:ex2}
\vspace{-0.1cm}
\end{figure}
We demonstrate the accuracy of the characteristic time approximationin~(\ref{che}), which also verifies Theorem~\ref{approx:che}. By computing~(\ref{ineq:67}), we obtain the equivalent size of the virtual LRU caching space $\bar x=91.1784 x$. For the characteristic time approximation of the cluster $\mathcal{C}$, we use a binary search to 
find the solution $t_C$ of the equation $\sum_{i=1}^{\infty}(1-e^{-q_i^{\circ}t_C})=91.1784x$ and then calculate the miss probability by~(\ref{che}).
It can be shown from the left figure in Fig.~\ref{fig:ex2} that the empirical results match well with the theoretical results for the miss probabilities of the virtual LRU caching server and the cluster~$\mathcal{C}$ even for $x=40$. In addition, the characteristic time approximation~(\ref{che}) provides an accurate estimation of the miss ratio of the cluster $\mathcal{C}$. 
\end{experiment}

\begin{experiment}
This experiment moves beyond the assumptions of this paper and considers a realistic setting. 
Thus, we cannot explicitly compute the equivalent virtual cache size by Theorem~\ref{theo:3}. However, we still 
demonstrate an equivalence between the cluster $\mathcal{C}$ and a virtual cache.
We set $\alpha_\circ=0.8$ and use a 2-independent hash 
function~\cite{Mitzenmacher:2008}. For a cluster of $100$ servers $\{S_1,S_2,\cdots, S_{100}\}$ described in Experiment~\ref{ex:1}, 
we hash each server to one of $2000$ partitions using the 2-independent hash function  $h_{a,b}(S_i)= N_i=((a\times i+b) \mod p)\mod 2000$, where
 $a,b$ are chosen 
from $\{1,2,\cdots, p\}$ uniformly at random with a large prime $p=15881$. Using the same hash function, each data item $d_i^\circ$ is hashed to one of these partitions. 
The data items from partition $k$ are stored on the server that has an index $\arg\min_{i}\{N_i: N_i\geq k\}$ if the set $\{i: N_i\geq k\}\neq \emptyset$ and $\arg\min_{i}\{N_i\}$ otherwise.    
\begin{figure}[h] 
\vspace{-0.2cm}
\centering
\includegraphics[width=8cm]{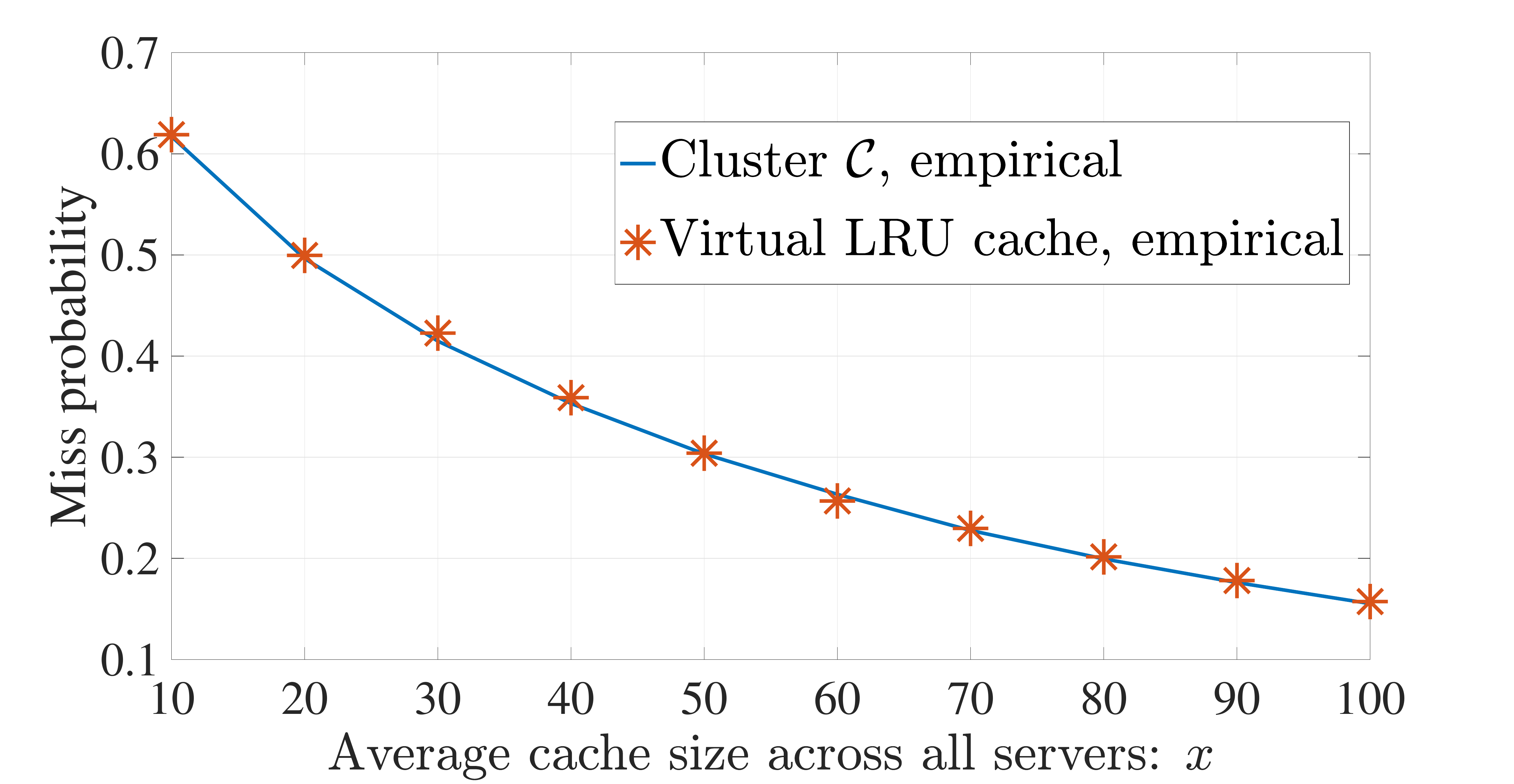}\vspace{0cm}
\caption{ Miss ratios of the cluster $\mathcal{C}$ and a virtual LRU cache} \label{fig:ex5} 
\vspace{-0.1cm}
\end{figure}
We set the total number of data items $M=10^4$, and the popularity distribution $q_i^{\circ}=c_\circ/i^{0.8},1\leq i\leq M$ with $c_\circ=1/(\sum_{i=1}^{M}i^{-0.8})= 0.0369$. 
For the virtual LRU caching space, we find the equivalent size $\bar x=70.0212 x$. It can be shown from Fig.~\ref{fig:ex5} that the empirical miss ratios of the cluster~$\mathcal{C}$ and the virtual LRU cache match very well. 

\end{experiment}

\section{Conclusion}
Driven by the trend to scale out caching systems for processing big data,  LRU caching with consistent hashing has been widely deployed. 
We develop a theoretical framework to investigate the miss ratio of a LRU cluster for a family of hash functions satisfying the Simple Uniform 
Hashing Assumption (SUHA). 
We derive a close-form asymptotic miss probability that holds almost surely for all of the random hash functions from this family. 
This result also establishs a one-to-one equivalence between a LRU cluster and a single virtual LRU server.  It provides 
a convenient abstraction to understand the complex LRU cluster using the insights obtained from a LRU server. 
Based on this connection, we also prove the characteristic time approximation for a cluster with consistent hashing.

\section{Proofs}
This section contains the proofs of our main theorems. 

\subsection{Proof of Theorem~\ref{theo:1}}\label{proof_th1} 

We rely on the following Lemma~\ref{le:1} to prove Theorem~\ref{theo:1}. 
To this end, we use Lemmas~\ref{qiqiplus1},~\ref{phi_m} and~\ref{le:tm} to study the three quantities in Lemma~\ref{le:1}, i.e., $q_i/q_{i+1}$, $\Phi(\cdot)$
and $T(x)$.
Specifically, 
we first note that the variables $\left(I_k^{(m)},k\geq 1\right)$ can be regarded as the indices of data items on server $S_m$ and 
show that $ 1\leq q_{I_k^{(m)}}^{(m)}/q_{I_k^{(m)}+1}^{(m)}\leq 1+\epsilon$ holds with high probability $1-c_1/n^2$ in Lemma~\ref{qiqiplus1}. 
Then, we prove that the functional relationship $\tilde{\Phi}_m(z)$ between $\left( \sum_{i=I_k^{(m)}+1}^{\infty}q_i^{(m)} \right)^{-1}$ and $\Big(q_{I_k^{(m)}}^{(m)}\Big)^{-1} $ satisfies $\tilde{\Phi}_m(z)\sim a z^\beta$ with high probability  $1-c_2/n^2$ in Lemma~\ref{phi_m}. Last, we show $T_{m}(z)\sim c_1z^{\alpha}$ with high probability $1-c_3/n^2$  in Lemma~\ref{le:tm}. Using $\sum_{n=1}^{\infty} (c_1+c_2+c_3)/n^2<\infty$ and Borel-Cantelli lemma,
we prove that Theorem~\ref{theo:1} holds almost surely for all $H\in\mathcal{H}$. 



Lemma~\ref{le:1} is a direct consequence of Theorem~1 in~\cite{jiantanSig}.
Let the data item popularity distribution on a server be $\left( q_i, i\geq 1\right)$. Consider the following functional relationship
\begin{align}\label{eq:relation}
 \Big(\sum_{i=y}^{\infty} q_{i}\Big)^{-1}  \sim \Phi \Big( q_{y}^{-1} \Big), \; y \to \infty.
\end{align}
Define an increasing function $T(x)=\sum_{i=1}^{\infty}\big (1-\big(1-q_{i}\big)^x\big)$ with an inverse $T^{\leftarrow}(x)$. 
We say that $f(x)\lesssim g(x)$ as $x\rightarrow\infty$ if $\limsup_{x\rightarrow\infty} f(x)/g(x)\leq 1$; $f(x)\gtrsim g(x)$ has a complementary definition.
  \begin{lemma}\label{le:1}
If $1\leq \lim_{i\rightarrow\infty}q_i/q_{i+1}<1+\epsilon$, $T(z)\sim c_1z^{\alpha},\alpha>0$ and $\Phi(z)\sim c_2z^\beta,\beta>0$, 
then, as $x\to \infty$,
\begin{align}\label{eq:missP}
\frac{\Gamma(1+\beta)(1+\epsilon)^{-1}}{\Phi(T^{\leftarrow}(x))} \lesssim 
 \Pr[C_0 >x ] \lesssim \frac{\Gamma(1+\beta)(1+\epsilon)}{\Phi(T^{\leftarrow}(x))}.
\end{align}
\end{lemma}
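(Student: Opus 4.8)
\begin{sketch}
Since Lemma~\ref{le:1} is announced as a consequence of Theorem~1 in~\cite{jiantanSig}, the plan is to make this reduction precise rather than to redo the analysis. First I would align the objects: the sequence $(q_i,\,i\ge 1)$ here is the popularity distribution on a single server, $T(x)=\sum_{i\ge 1}(1-(1-q_i)^x)$ is its distinct‑item count with inverse $T^{\leftarrow}$, and $\Phi$ is the function encoding the regular‑variation relation (\ref{eq:relation}). With these identifications, the three hypotheses of Lemma~\ref{le:1}---the bounded ratio $1\le\lim_{i\to\infty}q_i/q_{i+1}<1+\epsilon$, the power law $T(z)\sim c_1 z^{\alpha}$, and $\Phi(z)\sim c_2 z^{\beta}$---are exactly those under which Theorem~1 of~\cite{jiantanSig} yields the two‑sided estimate in (\ref{eq:missP}); so once the notation and the slack parameter $\epsilon$ are matched, the statement transfers verbatim. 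I would also stress that Lemma~\ref{le:1} is a purely deterministic statement about $(q_i)$: the randomness carried by $H$ enters only afterwards, when Lemmas~\ref{qiqiplus1},~\ref{phi_m} and~\ref{le:tm} verify these three hypotheses with high probability for the server sequences $(q_i^{(m)})$.

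For the reader's benefit I would also recall the mechanism behind (\ref{eq:missP}). Under IRM, LRU coincides with move‑to‑front, so $C_0$ equals one plus the number of \emph{distinct} items requested strictly between the most recent previous request for $R_0$ and time $\tau_0$. Conditioning on $R_0=d_i$ and embedding the stream so that the requests for distinct items form (nearly) independent renewal streams, the inter‑request gap for $d_i$ is geometric with mean $1/q_i$; given that the gap has length $t$, the events ``$d_j$ appears in the gap'' for $j\neq i$ are independent with probabilities close to $1-(1-q_j)^t$, so the conditional distinct count has mean $\approx T(t)$. A concentration bound---Chernoff or second moment, and this is where the ratio hypothesis $q_i/q_{i+1}<1+\epsilon$ enters, since it bounds the dispersion of the indicator sum---shows that this count exceeds $x$ essentially iff the gap exceeds $T^{\leftarrow}(x)$, so that
\begin{align}\label{eq:che-sketch}
\Pr[C_0>x]\;\approx\;\sum_{i=1}^{\infty}q_i(1-q_i)^{T^{\leftarrow}(x)}\;\approx\;\sum_{i=1}^{\infty}q_i e^{-q_i T^{\leftarrow}(x)},
\end{align}
which is precisely the characteristic‑time form (\ref{che}) with $t_C=T^{\leftarrow}(x)$. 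One then evaluates the sum in (\ref{eq:che-sketch}) as $x\to\infty$: the relation (\ref{eq:relation}) together with $\Phi(z)\sim c_2 z^{\beta}$ makes $(q_i)$ regularly varying, so the sum is comparable to the integral $\int q_y e^{-q_y T^{\leftarrow}(x)}\,dy$, and the substitution $u=q_y\,T^{\leftarrow}(x)$ turns it into a Gamma integral, producing the factor $\Gamma(1+\beta)$ and the denominator $\Phi(T^{\leftarrow}(x))$; tracking the one‑sided errors through the concentration step leaves the $(1+\epsilon)^{\pm 1}$ slack.

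The step I expect to be the main obstacle---and the reason I would treat Lemma~\ref{le:1} as a black box supplied by~\cite{jiantanSig} rather than reprove it here---is the concentration/threshold argument turning $\{\text{distinct count}>x\}$ into $\{\text{gap}>T^{\leftarrow}(x)\}$ with a genuinely two‑sided error of only $(1+\epsilon)^{\pm 1}$: the ``$d_j$ in the gap'' indicators are only conditionally independent given the random gap length, the gap length is itself random, and the estimate must be uniform enough in $i$ to survive summation against $q_i$ in (\ref{eq:che-sketch}). That delicate bookkeeping is exactly what is carried out in~\cite{jiantanSig}, which is why the cleanest proof is to quote it after checking the hypotheses.
\end{sketch}
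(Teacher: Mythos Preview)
Your plan to treat Lemma~\ref{le:1} as a black box quoted from~\cite{jiantanSig} is legitimate---the paper itself announces the lemma as a direct consequence of that theorem---but the paper then takes a different route and writes out a complete self-contained proof. The two steps in your sketch are the same two steps the paper carries out, though in the opposite order: the paper first establishes the asymptotic
\[
\frac{\Gamma(\beta+1)}{(1+\epsilon)\,\bar\Phi(n)}\;\lesssim\;\Pr[\sigma>n]\;\lesssim\;\frac{(1+\epsilon)\Gamma(\beta+1)}{\bar\Phi(n)}
\]
for the inter-request gap $\sigma$ by a direct dyadic slicing of $\sum_i q_i(1-q_i)^n$ and a change of variables that produces the Gamma integral; only afterwards does it introduce the distinct-item count $D(n)$, prove $\Pr[C_0>x]=\Pr[\sigma>D^{\leftarrow}(x)]$, and use a Chernoff bound to replace $D^{\leftarrow}(x)$ by $T^{\leftarrow}(x)$. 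What your approach buys is brevity; what the paper's approach buys is that the intermediate estimate on $\Pr[\sigma>n]$ and the pieces of its derivation are reused verbatim in Section~\ref{proof:theo3} to prove Theorem~\ref{approx:che} (see (\ref{shengnan1}) and the lines following), so the self-contained proof is not wasted effort.

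One point in your sketch is misplaced and worth correcting even if you ultimately cite: the ratio hypothesis $q_i/q_{i+1}<1+\epsilon$ is \emph{not} used in the concentration step. The concentration of $D(n)$ around $T(n)$ is a plain Chernoff bound on a sum of independent Bernoulli indicators and needs nothing about the $q_i$'s beyond $T(n)\to\infty$; see (\ref{ineq:dmn})--(\ref{ineq:pmm}). The ratio hypothesis enters only in the \emph{lower} bound for $\Pr[\sigma>n]$: one replaces $q_i$ by $q_{i-1}/(1+\epsilon)$ so that the sum $\sum_i q_{i-1}(1-q_i)^n$ telescopes against $Q_i-Q_{i-1}$ and becomes a Stieltjes integral amenable to the same substitution as the upper bound (equation (\ref{eigma3})). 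That substitution is the sole source of the persistent $(1+\epsilon)^{-1}$ slack in (\ref{eq:missP}); the errors from concentration and from regular-variation approximations are all $o(1)$ and vanish in the limit.
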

\begin{proof}
Let $\sigma$ be the largest integer such that $R_{-\sigma}=R_0$.  Since the requests $R_{-1},R_{-2},\cdots,R_{-\infty}$ are i.i.d, then we obtain $\Pr\left[\sigma>n \big | R_0=d_i\right]=\left(1-q_i\right)^{n}$, which, in conjunction with $\Pr\left[R_0=d_i\right]=q_i$, indicates that 
\begin{align}\label{qiqi}
\Pr[\sigma>n ]=\sum_{i=1}^{\infty} q_i\left( 1-q_i\right)^n.
\end{align}
In the following step, we show that $\Pr[\sigma>n] \sim \Gamma(\beta+1)/\Phi(n)$.  For simple notions, we define an increasing and continuous function $\bar \Phi(x)=c_2x^\beta,\beta>0$. Noting $\bar \Phi(x)\sim\Phi(x)$ and using~(\ref{eq:relation}),  we obtain,  for $\forall\,\epsilon_1>0$, there exists $n_1\in\mathbb{N}^+$ such that, for $i>n_1$
\begin{align}\label{eq:equivBound2}
\bar \Phi^{\leftarrow}   \left( (1-\epsilon_1) \left( \sum_{j=i}^{\infty} q_j \right)^{-1} \right)  \leq
\left(q_i\right)^{-1}\,\,\,\,\,\, \nonumber
\\\leq  \bar\Phi^{\leftarrow} \left((1+\epsilon_1) \left( \sum_{j=i}^{\infty} q_j \right)^{-1} \right).
\end{align}
where $\bar \Phi^{\leftarrow}(\cdot)$ is the inverse of $\bar \Phi(\cdot)$. Using $\left(1-q_i\right)^n\leq e^{-nq_i}$ in~(\ref{qiqi}), we obtain
\begin{align}\label{shengnan}
&\Pr[\sigma>n ]=\sum_{i=1}^{\infty}q_i\left(1-q_i\right)^n\nonumber
\\&\leq\sum_{i=1}^{n_1}\left(1-q_i\right)^n+\sum_{i=n_1+1}^{\infty}q_ie^{-nq_i}\eqdef P_1+P_2.
\end{align}
For any $0<\epsilon_2<q_{n_1+1}$ and a sufficiently large n, consider a sequence of indices  $i_1>i_2>\cdots>i_m>\cdots>i_{\lfloor \log n\epsilon_2 \rfloor}$ such that $q_{i_k+1}\leq e^{k}/n\leq q_{i_k}$ for $1\leq k\leq \lfloor \log n\epsilon_2 \rfloor$. Then, we have 
\begin{align}
P_2&\leq e^{-n\epsilon_2}+\sum_{i=i_{\lfloor \log n\epsilon_2 \rfloor}}^{i_m}q_ie^{-nq_i}+\sum_{i=i_m+1}^{\infty}q_ie^{-nq_i}\nonumber
\\&\leq e^{-n\epsilon_2}+\sum_{k=m}^{\infty}e^{-e^k}\sum_{j=i_{k+1}+1}^{i_k}q_j+\sum_{i=i_m+1}^{\infty}q_ie^{-nq_i}\nonumber
\\&\eqdef P_{21}+P_{22}+P_{23},\nonumber
\end{align}
where the second inequality follows from the fact that $q_l\geq q_{i_k},\,i_{k+1}+1\leq l\leq i_k $ for $m\leq k\leq \lfloor \log n\epsilon_2 \rfloor$. Letting $Q_i=\sum_{j=i}^{\infty} q_i$, we obtain
\begin{align}
&P_{23}\leq\sum_{j=i_m}^{\infty}(Q_j-Q_{j+1})e^{-n/\bar \Phi^{\leftarrow}((1+\epsilon_1)Q_j^{-1}) }\nonumber
\\&\leq\sum_{j=i_m}^{\infty}\int^{Q_j}_{Q_{j+1}}e^{-n/\bar \Phi^{\leftarrow}((1+\epsilon_1)/x) }dx\nonumber
\\&= \int_{0}^{Q_{i_m}}  e^{-n/\bar \Phi^{\leftarrow}\left((1+\epsilon_1)/x\right) }\,dx\,\,\xleftarrow{\textrm{replace}\,x\,\textrm{by}\, y=\frac{n}{\bar \Phi^{\leftarrow}\left((1+\epsilon_1)/x\right)} }\nonumber
\\&\leq \int_{0}^{\epsilon_1} d\left( \frac{1+\epsilon_1}{\bar\Phi(n/y)} \right) +\int_{\epsilon_1}^{e^m}e^{-y} d\left( \frac{1+\epsilon_1}{\bar\Phi(n/y)} \right),\nonumber 
\end{align}
where the second inequality follows from $e^{-n/\bar \Phi^{\leftarrow}\left((1+\epsilon_1)Q_j^{-1}\right) }$ $\leq$ $ e^{-n/\bar \Phi^{\leftarrow}\left((1+\epsilon_1)/x\right)}$ for $Q_{j+1}\leq x \leq Q_j, j\geq i_m$. Based on Theorem 1.2.1 of~\cite{regularVariation}, we have, if $n\rightarrow\infty$,
\begin{align}
&P_{23}\bar \Phi(n) <(1+\epsilon) \big((1+\epsilon_1)\epsilon_1^{\beta}+(1+\epsilon_1)\int_{\epsilon_1}^{e^m}\beta e^{-y}
y^{\beta-1}dy\big),\nonumber
\\&P_{22}\bar\Phi(n)<(1+\epsilon)(1+\epsilon_1)\sum_{k=m}^{\infty}e^{-e^k}e^{\beta(k+1)}.\nonumber
\end{align}
which, using $\lim_{n\rightarrow\infty}P_1\bar\Phi(n)=0$ and passing $\epsilon_1\rightarrow 0$ and $n,m\rightarrow\infty$, implies that 
\begin{align}\label{ineq:76}
\Pr[\sigma>n ]\bar \Phi(n)<(1+\epsilon)\int_{0}^{\infty}\beta e^{-y}
y^{\beta-1}dy\nonumber
\\=(1+\epsilon)\Gamma(\beta+1).
\end{align}
Since $1\leq \lim_{i\rightarrow\infty}q_i/q_{i+1}<1+\epsilon$, then for any $\epsilon_3>0$, there exists $n_2\in\mathbb{N}^+$ such that for $\forall\,i>n_2$, $q_i>(1-\epsilon_3)q_{i-1}/(1+\epsilon)$, which, in conjunction with~(\ref{eq:equivBound2}) and $q_{i-1}=Q_i-Q_{i-1}$, yields
\begin{align}\label{eigma3}
\Pr[\sigma>n ]&\geq \sum_{i=n_2+1}^{\infty}(1-\epsilon_3)q_{i-1}\left(  1 - q_i \right)^n/(1+\epsilon)\nonumber
\\\geq&\frac{1-\epsilon_3}{1+\epsilon}\int_{0}^{Q_{n_2}} \left(  1- 1/\bar\Phi^{\leftarrow}\left( (1-\epsilon)x^{-1} \right)  \right) dx.
\end{align}
Replacing $x$ by $y=n/\bar\Phi^{\leftarrow}\left( (1-\epsilon)x^{-1} \right)$ in~(\ref{eigma3}) and letting $\bar\Phi^{\leftarrow}\left( (1-\epsilon)Q_{n_2}^{-1} \right)=n/N$, we obtain
\begin{small}
\begin{align}\label{sigmaN}
\Pr[\sigma>n ]\bar\Phi(n)\geq \frac{1-\epsilon_3}{1+\epsilon}\int_{\epsilon}^{N}\left(1-\frac{y}{n}\right)^nd\left( \frac{1-\epsilon}{\bar\Phi(n/y)}        \right).
\end{align}
\end{small}
\hspace{-0.19cm}Then, using a similar approach to~(\ref{ineq:76}) and passing $n\rightarrow\infty$, we obtain from~(\ref{sigmaN}) that $
\Pr[\sigma>n]\bar \Phi(n)>\Gamma(\beta+1)/(1+\epsilon)$,
which, combined with~(\ref{ineq:76}), shows 
\begin{align}\label{ineq:main11}
\frac{(1+\epsilon)^{-1}\Gamma(\beta+1)}{\bar \Phi(n)}\lesssim\Pr[\sigma>n ]\lesssim\frac{(1+\epsilon)\Gamma(\beta+1)}{\bar \Phi(n)}.
\end{align}

Next, we use~(\ref{ineq:main11}) to prove~(\ref{eq:missP}).
Define $D(n)$ as the number of the different data items that have been requested at time $\tau_{-1},\cdots,\tau_{-n}$  and let the inverse of $D(n)$ be $D^{\leftarrow}(x)=\min\{n:D(n)\geq x\}$. It is not hard to show the event $\{C_0>x\}$ is equivalent to $\{  \sigma>D^{\leftarrow}(x)      \}$. 
Thus, we have
\begin{align}\label{eqeqeq}
\Pr[C_0>x ]=\Pr[\sigma>D^{\leftarrow}(x) ].
\end{align}
Define Bernoulli random variables $X_i,i\geq 1$ and let $X_i=1$ imply that the data item $d_i$ has been
requested at time $\tau_{-1},\cdots,\tau_{-n}$  and $0$ otherwise. Let $p_{i,n}\eqdef\Pr[X_i=1]$. Noting $\Pr[X_i=1]=1-(1-q_i)^n$, we obtain $\sum_{i=1}^{\infty}q_{i,n}=T(n)$. For any $\epsilon>0$, using Markov inequality and $\mathbb{E}\left[e^{\epsilon X_i/2}\right]=p_{i,n}\left(  e^{\epsilon/2}  -1\right)+1\leq e^{p_{i,n}(e^{\epsilon/2}-1)}$, we have 
\begin{align}\label{ineq:dmn}
&\Pr\left[  D(n)\geq (1+\epsilon)T(n)    \right]  \leq\mathbb{E}\Big [    e^{\frac{\epsilon}{2}\sum_{i=1}^{\infty}X_i}   \Big ]/e^{\frac{\epsilon}{2}(1+\epsilon) T(n)}\nonumber
\\&\leq \exp\left( \sum_{i=1}^{\infty} p_{i,n}\left(e^{\epsilon/2}-1\right)-\frac{\epsilon}{2}(1+\epsilon)\sum_{i=1}^\infty p_{i,n}\right).
\end{align}
When $\epsilon$ is small enough, we have $e^{\epsilon/2}-1\leq \frac{\epsilon}{2}(\frac{\epsilon}{2}+1)$. Thus, using~(\ref{ineq:dmn}), we obtain
\begin{align}\label{ineq:tmnpr}
\Pr\left[  D(n)\geq (1+\epsilon)T(n)    \right]\leq e^{-\epsilon^2T(n)/4}.
\end{align}
Letting $x=T(n)$ in~(\ref{ineq:tmnpr}) yields
\begin{align}\label{ineq:81}
\Pr\left[  D(T^{\leftarrow}(x))\geq (1+\epsilon)x   \right]\leq e^{-\epsilon^2 x/4}.
\end{align}
Let $\bar x=T^{\leftarrow}\left(x/(1+\epsilon)\right)$. Then, by~(\ref{ineq:81}), we obtain
\begin{align}\label{ineq:pmm}
\Pr\left[  D^{\leftarrow}(x)<\bar x\right ]&\leq \Pr\left[  D\left(T^{\leftarrow}\left(\frac{x}{1+\epsilon}\right)\right)\geq (1+\epsilon)\frac{x}{1+\epsilon} \right]\nonumber
\\ &\leq \exp\left(  -\frac{1}{4}\epsilon^2 x(1+\epsilon)^{-1}   \right).
\end{align}
Letting $\epsilon=x^{-\upsilon},\,\upsilon<1/2$ and using~(\ref{ineq:main11}),~(\ref{eqeqeq}),~(\ref{ineq:pmm}) and a union bound, we obtain, as $x\rightarrow\infty$,
\begin{small}
\begin{align}\label{ineq:finals}
&\Pr\left[  C_0>x \right]\leq\Pr\left[ \sigma>D^{\leftarrow}(x),D^{\leftarrow}(x)\geq \bar x \right]+\Pr\left[ D^{\leftarrow}(x)<\bar x\right]\nonumber
\\&\leq \Pr\left[ \sigma>T^{\leftarrow}\left(\frac{x}{1+x^{-\upsilon}}\right) \right]+ \exp\left(  -\frac{1}{4} x^{1-2\upsilon}(1+x^{-\upsilon})^{-1}   \right)\nonumber
\\ &\lesssim \frac{(1+\epsilon)\Gamma(\beta+1)}{\bar \Phi(T^{\leftarrow}\left(x/(1+x^{-\upsilon)}\right))}+ \exp\left(  -\frac{x^{1-2\upsilon}  }{4(1+x^{-\upsilon})} \right)
\end{align}
\end{small}
\hspace{-0.12cm}Noting $\bar \Phi(T^{\leftarrow}(x))\sim c_1c_2^{-\beta/\alpha}x^{\beta/\alpha}$ and passing $x\rightarrow\infty$, we have, $\bar \Phi(T^{\leftarrow}(x)) \exp\left(  -\frac{1}{4}\epsilon^2 x(1+\epsilon)^{-1}   \right)\rightarrow0$. Thus, passing $x\rightarrow\infty$ and using~(\ref{ineq:finals}), we obatin
\begin{align}\label{ineq:lele}
\Pr\left[  C_0>x \right]\lesssim (1+\epsilon)\Gamma(\beta+1)/\bar \Phi(T^{\leftarrow}\left(x\right)).
\end{align}
Letting $\bar x=T^{\leftarrow}\left(x/(1-\epsilon)\right)$, by a similar approach to~(\ref{ineq:lele}), we obtain, as $x\rightarrow\infty$,
\begin{align}\label{ineq:lele2}
\Pr\left[  C_0>x \right]\gtrsim (1+\epsilon)^{-1}\Gamma(\beta+1)/\bar \Phi(T^{\leftarrow}\left(x\right)).
\end{align}
Combining~(\ref{ineq:lele}) and~(\ref{ineq:lele2}) finishes the proof.
\end{proof}

We introduce some necessary definitions. Define mutually independent Bernoulli random variables $\{X_i^{(m)}\}$. Let
$X_i^{(m)}=1$ indicate that the data item $d_i^{\circ}$ is hashed to server $S_m$ and $X_i^{(m)}=0$ otherwise. 
We have $\Pr[X_i^{(m)}=1]=\mu_m$.  
Define $I_k^{(m)}\eqdef\sum_{i=1}^{k}X_i^{(m)}$, which represents the number of data items hashed to server $S_m$ from 
$(d_i^{\circ},1\leq i\leq k)$. 
Let $Z_i^{(m)}\eqdef q_i^{\circ}X_i^{(m)}$ and $Y_n^{(m)}\eqdef \sum_{i=n}^{\infty}Z_i^{(m)}$.
We quote Bernstein's inequality in Lemma~\ref{bern}, and establish the following Lemma~\ref{le:uB} to estimate $Y_k^{(m)}$, which will be used to estimate $\sum_{i=I_k^{(m)}+1}^{\infty}q_i^{(m)}$ in (\ref{ineq:18}).

%
\begin{lemma}[Theorem $2.8$ in~\cite{chung2006complex}]\label{bern}
For independent random variables $ X_i \leq M,\,1\leq i\leq n$ with $X=\sum_{i=1}^{n}X_i$, we obtain,  $\forall\,\epsilon>0$, 
\begin{align}\label{bern:bound}
\Pr\left[ X-\mathbb{E}[X]>\epsilon \right]\leq \exp\left(-\frac{\epsilon^2}{2\sum_{i=1}^n\mathbb{E}[X_i^2]+\frac{2M\epsilon}{3}}\right).
\end{align}
\end{lemma}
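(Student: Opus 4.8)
The plan is to prove this by the classical exponential-moment (Chernoff) argument, recognizing (\ref{bern:bound}) as the one-sided Bernstein inequality. First I would fix $t>0$ and apply Markov's inequality to the nonnegative random variable $e^{t(X-\mathbb{E}[X])}$, which gives $\Pr\left[X-\mathbb{E}[X]>\epsilon\right]\le e^{-t\epsilon}\,\mathbb{E}\big[e^{t(X-\mathbb{E}[X])}\big]$. By the independence of $X_1,\dots,X_n$ the moment generating function factorizes, $\mathbb{E}\big[e^{t(X-\mathbb{E}[X])}\big]=\prod_{i=1}^{n}\mathbb{E}\big[e^{t(X_i-\mathbb{E}[X_i])}\big]$, so it suffices to control each single-variable factor.

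For the per-term estimate, set $Y_i\eqdef X_i-\mathbb{E}[X_i]$, so that $\mathbb{E}[Y_i]=0$ and $Y_i\le M$ (the latter using $\mathbb{E}[X_i]\ge 0$, which holds in all the applications in this paper, where the relevant $X_i$ are nonnegative). I would invoke the elementary fact that $g(s)\eqdef(e^{s}-1-s)/s^{2}$ is nondecreasing on $\mathbb{R}$; applied with $s=tY_i\le tM$ this yields $e^{tY_i}\le 1+tY_i+(tY_i)^{2}g(tM)$. Taking expectations the linear term drops out and $\mathbb{E}[Y_i^{2}]=\mathrm{Var}(X_i)\le\mathbb{E}[X_i^{2}]$, hence $\mathbb{E}\big[e^{tY_i}\big]\le 1+t^{2}\mathbb{E}[X_i^{2}]\,g(tM)\le\exp\!\big(t^{2}\mathbb{E}[X_i^{2}]\,g(tM)\big)$. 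Multiplying over $i$ gives $\Pr\left[X-\mathbb{E}[X]>\epsilon\right]\le\exp\!\big(-t\epsilon+t^{2}g(tM)\sum_{i=1}^{n}\mathbb{E}[X_i^{2}]\big)$.

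Next I would insert the standard bound $g(s)\le\frac{1}{2(1-s/3)}$, valid for $0\le s<3$, which follows by comparing the power series $g(s)=\sum_{k\ge 2}s^{k-2}/k!$ term by term with $\tfrac{1}{2}\sum_{k\ge 0}(s/3)^{k}$ (using $(k+2)!\ge 2\cdot 3^{k}$). This turns the estimate into $\Pr\left[X-\mathbb{E}[X]>\epsilon\right]\le\exp\!\big(-t\epsilon+\frac{t^{2}\sum_i\mathbb{E}[X_i^{2}]}{2(1-tM/3)}\big)$ for $0<t<3/M$. Finally I would optimize over $t$ with the canonical choice $t=\epsilon/(\sum_i\mathbb{E}[X_i^{2}]+M\epsilon/3)$; a one-line computation shows that this $t$ satisfies $tM<3$ (whenever $\sum_i\mathbb{E}[X_i^{2}]>0$, the only nontrivial case) and that the exponent collapses exactly to $-\epsilon^{2}/(2\sum_i\mathbb{E}[X_i^{2}]+\tfrac{2M\epsilon}{3})$, which is (\ref{bern:bound}).

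I expect no deep obstacle, only careful bookkeeping: one must check that centering preserves the upper bound $Y_i\le M$ (legitimate in our setting since the pertinent variables, e.g.\ $Z_i^{(m)}=q_i^{\circ}X_i^{(m)}$, are nonnegative and hence have $\mathbb{E}[X_i]\ge 0$), verify the two elementary facts about $g$, and confirm $tM<3$ at the optimizing $t$. Since (\ref{bern:bound}) is quoted verbatim from \cite{chung2006complex}, in the paper this lemma needs nothing beyond the citation; the above is how one would reconstruct its proof if desired.
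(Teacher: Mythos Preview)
Your proposal is correct and is the standard Chernoff-type derivation of the one-sided Bernstein bound; you have also correctly anticipated that the paper itself supplies no proof at all, merely quoting the result as Theorem~2.8 of~\cite{chung2006complex}. The only subtlety you flag --- that centering preserves $Y_i\le M$ only when $\mathbb{E}[X_i]\ge 0$ --- is indeed harmless here since every application in the paper (to $Z_i^{(m)}$, to $X_i^{(m)}$, etc.) involves nonnegative summands.
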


\begin{lemma}\label{le:uB}
There exist $n_0\in\mathbb{N}^+$ and $c_0>0$ such that for $\forall\, n>n_0$, 
\begin{align}
\Pr &\left[  \bigcap_{k\geq n} \left\{ \big |Y_k^{(m)}-\mathbb{E}[Y_k^{(m)}]\big|<1/k^{\alpha_{\circ}-\frac{2}{3}}  \right\} \right]> 1-\frac{c_0}{n^2}.
\end{align}
\end{lemma}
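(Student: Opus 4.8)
The plan is to apply Bernstein's inequality (Lemma~\ref{bern}) to the tail series $Y_k^{(m)}=\sum_{i=k}^{\infty}Z_i^{(m)}$ for each fixed $k$, obtain an exponentially small bound on the probability that $|Y_k^{(m)}-\mathbb{E}[Y_k^{(m)}]|$ exceeds $k^{-(\alpha_\circ-2/3)}$, and then close the intersection over $k\ge n$ with a union bound. First I would record the three quantities that feed Bernstein's inequality. Since $(q_i^\circ)$ is non-increasing with $q_i^\circ\sim c_\circ/i^{\alpha_\circ}$ and $\alpha_\circ>1$, integral comparison gives, for all large enough $k$: the almost-sure bound $Z_i^{(m)}\le q_i^\circ\le q_k^\circ=\Theta(k^{-\alpha_\circ})$ (this is the constant $M$); the mean $\mathbb{E}[Y_k^{(m)}]=\mu_m\sum_{i\ge k}q_i^\circ=\Theta(k^{-(\alpha_\circ-1)})$; and the variance proxy $\sum_{i\ge k}\mathbb{E}[(Z_i^{(m)})^2]=\mu_m\sum_{i\ge k}(q_i^\circ)^2=\Theta(k^{-(2\alpha_\circ-1)})$. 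Because Lemma~\ref{bern} is stated for finite sums, I would first apply it to the partial sums $\sum_{i=k}^{K}Z_i^{(m)}$, using the same $M$ and the uniform-in-$K$ variance bound $\sum_{i\ge k}\mathbb{E}[(Z_i^{(m)})^2]$, and then let $K\to\infty$; this limiting step is harmless since the $Z_i^{(m)}\ge 0$ and the series converges in $L^2$.

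Next I would substitute $\epsilon=k^{-(\alpha_\circ-2/3)}$ into (\ref{bern:bound}), applying it once to $(Z_i^{(m)})$ and once to $(-Z_i^{(m)})$ to get a two-sided estimate. The numerator becomes $\epsilon^2=k^{-(2\alpha_\circ-4/3)}$, while the denominator $2\sum_{i\ge k}\mathbb{E}[(Z_i^{(m)})^2]+\tfrac{2}{3}M\epsilon$ is of order $k^{-(2\alpha_\circ-1)}$, because the variance term $k^{-(2\alpha_\circ-1)}$ dominates the term $M\epsilon=\Theta(k^{-(2\alpha_\circ-2/3)})$ — here the inequality $2/3>1/2$ is exactly what makes this work. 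Consequently the Bernstein exponent is of order $k^{(2\alpha_\circ-1)-(2\alpha_\circ-4/3)}=k^{1/3}$, so there are constants $c>0$, $n_1\in\mathbb{N}^+$ with
\begin{align}
\Pr\big[\,|Y_k^{(m)}-\mathbb{E}[Y_k^{(m)}]|\ge k^{-(\alpha_\circ-2/3)}\,\big]\le 2e^{-c k^{1/3}}\quad\text{for all }k\ge n_1.
\end{align}

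Then I would take a union bound over $k\ge n$: for $n\ge n_1$,
\begin{align}
\Pr\Big[\bigcup_{k\ge n}\big\{|Y_k^{(m)}-\mathbb{E}[Y_k^{(m)}]|\ge k^{-(\alpha_\circ-2/3)}\big\}\Big]\le\sum_{k\ge n}2e^{-c k^{1/3}}.
\end{align}
A routine integral comparison (substituting $u=x^{1/3}$) shows $\sum_{k\ge n}e^{-ck^{1/3}}\le C\,n^{2/3}e^{-c n^{1/3}}$ for large $n$, which decays faster than any power of $n$; hence there exist $c_0>0$ and $n_0\ge n_1$ with $\sum_{k\ge n}2e^{-ck^{1/3}}\le c_0/n^2$ for all $n>n_0$. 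Passing to complements gives the stated bound (the strict inequality inside the intersection follows from the non-strict one at no cost).

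The main obstacle — really the only delicate point — is the exponent bookkeeping that certifies $\epsilon=k^{-(\alpha_\circ-2/3)}$ sits in the sub-Gaussian regime of Bernstein's inequality (variance term dominating $M\epsilon$, i.e. $\alpha_\circ-2/3>\alpha_\circ-1$) while still being small enough that the resulting exponent $k^{1/3}$ tends to infinity (i.e. $\alpha_\circ-2/3<\alpha_\circ-1/2$); both constraints pin down the constant $2/3$. Making the passage from the finite partial sums to the infinite series $Y_k^{(m)}$ rigorous is the other bit of care needed. Everything else — the Zipf tail asymptotics for $\sum_{i\ge k}q_i^\circ$ and $\sum_{i\ge k}(q_i^\circ)^2$, and the tail bound on $\sum_{k\ge n}e^{-ck^{1/3}}$ — is standard.
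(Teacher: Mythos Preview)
Your proposal is correct and follows essentially the same approach as the paper: apply Bernstein's inequality (Lemma~\ref{bern}) to $Y_k^{(m)}$ and to $-Y_k^{(m)}$ with the deviation level $\epsilon=k^{-(\alpha_\circ-2/3)}$, use the Zipf tail asymptotics to see that the Bernstein exponent is of order $k^{1/3}$, and then take a union bound over $k\ge n$ and compare the resulting series $\sum_{k\ge n}e^{-ck^{1/3}}$ to an integral that is $o(n^{-2})$. Your remark about passing from finite partial sums to the infinite series is a point of rigor the paper simply elides; one small quibble is that the two constraints you list only force the exponent to lie in $(1/2,1)$, so $2/3$ is a convenient choice rather than a forced one.
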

\begin{proof}
Define events $\mathcal{A}_k\eqdef\{ |Y_k^{(m)}-\mathbb{E}[Y_k^{(m)}]|<1/k^{\alpha_{\circ}-\frac{2}{3}}\} ,\,k\geq 1$.  Recalling $q_i^{\circ}$ is non-increasing with respect to~$i$, we have, for all $ i\geq n$ 
\begin{align}\label{cc1}
Z_i^{(m)}\leq q_i^{\circ} \leq q_n^{\circ}\lesssim c_{\circ}/n^{\alpha_{\circ}}.\end{align}
 Noting that $\mathbb{E}[(Z_i^{(m)})^2]=\mu_m(q_i^{\circ})^2\sim\mu_mc_{\circ}^2/i^{2\alpha_{\circ}}$, we have
\begin{align}\label{cc2}
\sum_{i=n}^{\infty}\mathbb{E}[(Z_i^{(m)})^2]\sim\int_{n}^{\infty}\frac{\mu_mc_{\circ}^2}{x^{2\alpha_{\circ}}}dx=\frac{\mu_mc_{\circ}^2}{(2\alpha_{\circ}-1)n^{2\alpha_{\circ}-1}}.
\end{align}
Applying~(\ref{bern:bound}) for the random variable $Y_n^{(m)}=\sum_{i=n}^{\infty}Z_i^{(m)}$ and using~(\ref{cc1}) and~(\ref{cc2}),  we have,
 there exist $n_{i_1}\in\mathbb{N}^+$ and $v>0$ such that for $\forall\,n>n_{i_1}$, 
 \begin{align}\label{pre:re}
 \Pr\left[Y_n^{(m)}-\mathbb{E}[Y_n^{(m)}]>\frac{1}{n^{\alpha_{\circ}-2/3}}\right]<\exp(-n^{1/3}v). 
\end{align}
In the meanwhile,   applying~(\ref{bern:bound}) for $-Y_n^{(m)}=\sum_{i=n}^{\infty}(-Z_i^{(m)})$ and using $-Z_i^{(m)}<0$ and~(\ref{cc2}), we have,  for $ \forall\, n>n_{i_1}$, 
\begin{align}\label{symmetry2}
\Pr&\left[Y_n^{(m)}-\mathbb{E}[Y_n^{(m)}]<-\frac{1}{n^{\alpha_{\circ}-2/3}}\right]
<\exp(-n^{1/3}v),
\end{align}
which, in conjunction with~(\ref{pre:re}), implies, for $n>n_{i_1}$ 
\begin{align}\label{ineq:36}
\Pr\left[\Big |Y_n^{(m)}-\mathbb{E}[Y_n^{(m)}]\Big |\geq\frac{1}{n^{\alpha_{\circ}-2/3}}\right]<2\exp(-n^{\frac{1}{3}}v).
\end{align}
Using (\ref{ineq:36}) and a union bound, we obtain, for $\forall n>n_{i_1}$
\begin{small}
\begin{align}\label{meshi}
\Pr\left[\left\{\cap_{k\geq n}\,\mathcal{A}_k\right\}^c\right]&=\Pr\left[\cup_{k\geq n}\left\{\big |Y_k^{(m)}-\mathbb{E}[Y_k^{(m)}]\big|\geq 1/k^{\alpha_{\circ}-2/3}\right\}\right]\nonumber
\\\leq& \sum_{k=n}^{\infty}\Pr\left[\big |Y_k^{(m)}-\mathbb{E}[Y_k^{(m)}]\big|\geq 1/k^{\alpha_{\circ}-2/3}\right]\nonumber
\\<&\sum_{k=n}^{\infty}2e^{-k^{1/3}v}<\int_{n}^{\infty}\exp(-x^{1/3}v)dx.
\end{align}
\end{small}
\hspace{-0.12cm}There exist $n_{i_2}$ and $c_0>0$ such that $\int_{n}^{\infty}\exp(-x^{1/3}v)dx<c_0/n^2$ holds for $n>n_{i_2}$.
Using~(\ref{meshi}) and letting $n_0\eqdef\max\{n_{i_1},n_{i_2}\}$, we finish the proof.
\end{proof}
We establish the following lemma~\ref{le:7} to estimate $q_{I_k^{(m)}}^{(m)}$, which is used to estimate $\tilde{\Phi}_m(z)$ and the ratio $ q_{I_k^{(m)}}^{(m)}/q_{I_k^{(m)}+1}^{(m)}$. 

\begin{lemma}\label{le:7}
There exist $n_1\in\mathbb{N}^+$ and $c_1>0$ such that for $\forall\, n>n_1$, 
\begin{small}
\begin{align}\label{knm}
\Pr &\left[ \bigcap_{k\geq n} \left\{  W_mq_k^{\circ}\leq q_{I_k^{(m)}}^{(m)}<W_m q^\circ_{k-\lceil k^{1/2} \rceil  +1} 
 \right\} \right]< 1-\frac{c_1}{n^2}.
\end{align}
\end{small}
\end{lemma}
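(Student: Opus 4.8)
The statement as printed asserts that the displayed good event has probability \emph{below} $1-c_1/n^2$. By the exact parallel with Lemma~\ref{le:uB}, where the analogous intersection of desirable bounds is shown to have probability $>1-c_0/n^2$, this sign is transposed: the provable content (and the form actually needed for the Borel--Cantelli step that combines Lemmas~\ref{le:uB} and~\ref{le:7}) is the lower bound $\Pr[\,\bigcap_{k\ge n}\{\cdots\}\,]>1-c_1/n^2$, equivalently that the failure probability is below $c_1/n^2$. The plan is therefore to bound this failure probability. First I would dispose of the left inequality by showing $W_mq_k^{\circ}\le q_{I_k^{(m)}}^{(m)}$ holds \emph{deterministically}, so it never contributes. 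Writing $q_j^{(m)}=W_mq_{m_j}^{\circ}$ and noting that $m_{I_k^{(m)}}=\max\{i\le k:X_i^{(m)}=1\}\le k$ is the original index of the last item among $d_1^\circ,\dots,d_k^\circ$ hashed to $S_m$, monotonicity of $(q_i^\circ)$ gives $q_{m_{I_k^{(m)}}}^{\circ}\ge q_k^{\circ}$, hence $q_{I_k^{(m)}}^{(m)}=W_mq_{m_{I_k^{(m)}}}^{\circ}\ge W_mq_k^{\circ}$.

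Next I would translate the right inequality into a clean window-hashing event. Dividing by $W_m>0$ and using the strict monotonicity of the Zipf weights $q_i^{\circ}\sim c_\circ/i^{\alpha_\circ}$, the bound $q_{I_k^{(m)}}^{(m)}<W_mq_{k-\lceil k^{1/2}\rceil+1}^{\circ}$ is equivalent to $m_{I_k^{(m)}}>k-\lceil k^{1/2}\rceil+1$, i.e.\ to the event that at least one item with original index in the window $\mathcal{W}_k\eqdef\{k-\lceil k^{1/2}\rceil+2,\dots,k\}$ is hashed to $S_m$. Thus the bad event $\mathcal{B}_k$ (right inequality violated at $k$) is exactly $\{X_i^{(m)}=0,\ \forall i\in\mathcal{W}_k\}$, and by the independence of $(X_i^{(m)})$ under Assumption~\ref{assump:1} its probability is $\Pr[\mathcal{B}_k]=(1-\mu_m)^{|\mathcal{W}_k|}=(1-\mu_m)^{\lceil k^{1/2}\rceil-1}$.

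Finally I would estimate the failure probability of the whole intersection by a union bound, $\Pr[(\bigcap_{k\ge n}\{\cdots\})^c]=\Pr[\bigcup_{k\ge n}\mathcal{B}_k]\le\sum_{k\ge n}(1-\mu_m)^{\lceil k^{1/2}\rceil-1}$. Comparing this tail sum with $\int_n^\infty(1-\mu_m)^{\sqrt{x}-1}\,dx$ and substituting $u=\sqrt{x}$ (so $dx=2u\,du$) turns it into $\int_{\sqrt n}^\infty 2u\,(1-\mu_m)^{u-1}\,du$, which decays like $(1-\mu_m)^{\sqrt n}$ up to a polynomial factor and is therefore $o(1/n^2)$. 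Hence there exist $n_1$ and $c_1>0$ with $\sum_{k\ge n}(1-\mu_m)^{\lceil k^{1/2}\rceil-1}<c_1/n^2$ for all $n>n_1$, and taking complements gives $\Pr[\,\bigcap_{k\ge n}\{\cdots\}\,]=1-\Pr[\bigcup_{k\ge n}\mathcal{B}_k]>1-c_1/n^2$. On this event $q_{I_k^{(m)}}^{(m)}$ is sandwiched between $W_mq_k^{\circ}$ and $W_mq_{k-\lceil k^{1/2}\rceil+1}^{\circ}$ for all $k\ge n$ simultaneously, which is precisely what the later estimates of $\tilde{\Phi}_m(z)$ and of the ratio $q_{I_k^{(m)}}^{(m)}/q_{I_k^{(m)}+1}^{(m)}$ require.

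The main obstacle I anticipate is not probabilistic but combinatorial bookkeeping: correctly identifying $q_{I_k^{(m)}}^{(m)}$ with $W_mq_{m_{I_k^{(m)}}}^{\circ}$ and converting the popularity inequality into the clean assertion ``some index in a length-$\sqrt k$ window is hashed to $S_m$,'' while dispatching the edge case $I_k^{(m)}=0$ (which has super-polynomially small probability for $k\ge n$ and so is harmless). Once the event is reduced to $\mathcal{B}_k$, the estimate is comfortable, because the $\sqrt k$ window makes each per-$k$ failure probability exponentially small in $\sqrt k$, leaving the summable $c_1/n^2$ tail bound with ample room to spare.
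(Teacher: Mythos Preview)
Your proposal is correct and follows essentially the same route as the paper: the paper likewise observes that the left inequality $W_mq_k^{\circ}\le q_{I_k^{(m)}}^{(m)}$ is automatic, shows the bad event $\mathcal{B}_k=\{q_{I_k^{(m)}}^{(m)}\ge W_mq^{\circ}_{k-\lceil k^{1/2}\rceil+1}\}$ is equivalent to the ``no hash in a $\sqrt{k}$-window'' event $\mathcal{C}_k$, computes $\Pr[\mathcal{B}_k]=(1-\mu_m)^{\lceil k^{1/2}\rceil}$, and finishes with the same union bound and integral comparison. Your identification of the inequality sign in the displayed statement as a typo is also correct; the paper's own argument proves (and subsequently uses) the lower bound $>1-c_1/n^2$.
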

\begin{proof}
Define events $\mathcal{B}_k\eqdef\{ q_{I_k^{(m)}}^{(m)}\geq W_mq^\circ_{k-\lceil k^{1/2} \rceil  +1}\}$. 
Let 
$\mathcal{C}_k\eqdef \left \{  H( d_i^{\circ} ) \neq S_m, k-\lceil k^{1/2} \rceil  +1\leq i\leq k \right\}$ be the event that none
of data items $d_i^\circ, k-\lceil k^{1/2} \rceil  +1\leq i\leq k$ are hashed to $S_m$. 
Next, we will show that $\mathcal{B}_k$ and $\mathcal{C}_k$ are equivalent.

Since both $q_i^{(m)}$ and $q_i^{\circ}$ are non-increasing with respect to~$i$,
the event $\mathcal{B}_k$ implies,  
\begin{align}\label{b1}
d_i^{(m)} \notin   \left \{d_j^\circ,k-\lceil k^{1/2} \rceil  +1\leq j\leq k \right \}, 1\leq i\leq I_k^{(m)}.
\end{align}
Moreover, based on the definition of $I_k^{(m)}$, we have, $\{d_i^{(m)}, i\geq I_k^{(m)}+1  \} \subseteq \{ d_i^\circ,\, i\geq k+1  \}$, which 
implies,
\begin{align}\label{b2}
d_i^{(m)} \notin   \left \{d_j^\circ,k-\lceil k^{1/2} \rceil  +1\leq j\leq k \right \},i\geq I_k^{(m)}+1.
\end{align}
 Combining~(\ref{b1}) and~(\ref{b2}) yields $\mathcal{B}_k \subseteq \mathcal{C}_k$. On the other side, the event $\mathcal{C}_k$ implies $q_i^{(m)}\geq W_mq^\circ_{k-\lceil k^{1/2} \rceil  +1}$ for all $1\leq i\leq I_k^{(m)}$, yieding $\mathcal{C}_k \subseteq\mathcal{B}_k$. Thus, we have $\mathcal{B}_k$ is equivalent to $\mathcal{C}_k$. 
Under Assumption~\ref{assump:1}, we have $\Pr\left[\mathcal{B}_k\right]=\Pr\left[\mathcal{C}_k\right]=(1-\mu_m)^{\lceil k^{1/2}\rceil}$.
Noting the complement $\mathcal{B}_k^c=\big\{   W_mq^\circ_{k-\lceil k^{1/2} \rceil  +1}>q_{I_k^{(m)}}^{(m)} \geq W_mq_k^\circ   \big\}$ and  using a union bound, we obtain,
\begin{align}\label{ineq:bc}
\Pr&\left[\cap_{k\geq n} \mathcal{B}_k^c\right]=1-\Pr\left[\cup_{k \geq n} \mathcal{B}_k\right]\geq 1-\sum_{k\geq n}(1-\mu_m)^{\lceil k^{1/2}\rceil}\nonumber
\\&\geq 1-\sum_{k\geq n}(1-\mu_m)^{k^{1/2}}\geq 1-\int_{n-1}^{\infty} (1-\mu_m)^{x^{1/2}}dx
\end{align}
There exist a large integer $n_1$ and a constant $c_1>0$ such that for all $n\geq n_1$, $\int_{n-1}^{\infty} (1-\mu_m)^{x^{1/2}}dx<c_1/n^2$, which, in conjunction with~(\ref{ineq:bc}), completes the proof.
\end{proof}


We use the following Lemma~\ref{qiqiplus1} to estimate $ q_{I_k^{(m)}}^{(m)}/q_{I_k^{(m)}+1}^{(m)}$. The proof is straightforward by using Lemma~\ref{le:7}.
\begin{lemma}\label{qiqiplus1}
For any $\epsilon_q>0$, there exists  $n_2>n_1$ such that, for all $n>n_2$ 
\begin{align}\label{qiqi1}
\Pr\bigg[\bigcap_{k\geq n}\bigg\{1&\leq q_{I_k^{(m)}}^{(m)}/q_{I_k^{(m)}+1}^{(m)}<1+\epsilon_q \bigg\}\bigg]>1-\frac{2c_1}{n^2}
\end{align}
where the constant $c_1$ is the same as  in Lemma~\ref{le:7}.
\end{lemma}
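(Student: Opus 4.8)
The lower bound $q^{(m)}_{I_k^{(m)}}/q^{(m)}_{I_k^{(m)}+1}\ge 1$ requires no work, being immediate from the fact that $(q^{(m)}_i,i\ge 1)$ is non-increasing, so the content is entirely in the uniform (over $k\ge n$) upper bound. The plan is to run everything on the high‑probability event already produced in Lemma~\ref{le:7}, intersected with one extra event of the same flavour that controls how far past index $k$ one must look to find the next data item hashed to $S_m$, and then to extract the ratio bound from the Zipf tail.

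First I would set $E_n=\bigcap_{k\ge n}\{W_mq^\circ_k\le q^{(m)}_{I_k^{(m)}}<W_mq^\circ_{k-\lceil k^{1/2}\rceil+1}\}$, the event of Lemma~\ref{le:7}, which satisfies $\Pr[E_n]>1-c_1/n^2$ for $n>n_1$. The upper inequality in $E_n$ controls the numerator $q^{(m)}_{I_k^{(m)}}$ directly; the subtlety is the denominator $q^{(m)}_{I_k^{(m)}+1}$, which corresponds to the next original item (after $d^\circ_k$) hashed to $S_m$ and so sits at an a priori unbounded original index. To handle this I would introduce $G_n=\bigcap_{k\ge n}\{\text{at least one of }d^\circ_{k+1},\dots,d^\circ_{k+\lceil k^{1/2}\rceil}\text{ is hashed to }S_m\}$. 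Under Assumption~\ref{assump:1} the $k$‑th complementary event has probability $(1-\mu_m)^{\lceil k^{1/2}\rceil}$, so a union bound together with the very integral estimate already used for~(\ref{ineq:bc}) gives $\Pr[G_n^c]\le\sum_{k\ge n}(1-\mu_m)^{\lceil k^{1/2}\rceil}<c_1/n^2$ with the same constant $c_1$; hence $\Pr[E_n\cap G_n]>1-2c_1/n^2$.

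On $E_n\cap G_n$ I would fix $k\ge n$: the upper bound of $E_n$ at index $k$ gives $q^{(m)}_{I_k^{(m)}}<W_mq^\circ_{k-\lceil k^{1/2}\rceil+1}$, while $G_n$ forces $I^{(m)}_{k+\lceil k^{1/2}\rceil}\ge I_k^{(m)}+1$, so monotonicity of $(q^{(m)}_i)$ and the lower bound of $E_n$ applied at index $k+\lceil k^{1/2}\rceil$ (which is $\ge n$) yield $q^{(m)}_{I_k^{(m)}+1}\ge q^{(m)}_{I^{(m)}_{k+\lceil k^{1/2}\rceil}}\ge W_mq^\circ_{k+\lceil k^{1/2}\rceil}$. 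Dividing, the factor $W_m$ cancels and
\begin{align*}
1\le \frac{q^{(m)}_{I_k^{(m)}}}{q^{(m)}_{I_k^{(m)}+1}}<\frac{q^\circ_{k-\lceil k^{1/2}\rceil+1}}{q^\circ_{k+\lceil k^{1/2}\rceil}}.
\end{align*}
Since $q^\circ_i\sim c_\circ/i^{\alpha_\circ}$, the right-hand side is asymptotic to $\big((k+\lceil k^{1/2}\rceil)/(k-\lceil k^{1/2}\rceil+1)\big)^{\alpha_\circ}\to 1$ as $k\to\infty$, so I can choose $n_2>n_1$ (also large enough that all the indices appearing are positive) with this quotient below $1+\epsilon_q$ for every $k\ge n_2$. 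For $n>n_2$ the bound then holds simultaneously for all $k\ge n$ on $E_n\cap G_n$, which is exactly~(\ref{qiqi1}).

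I expect the only mildly delicate point to be the denominator bookkeeping: turning ``the next item hashed to $S_m$'' into a quantitative statement uniform in $k$, which is precisely what the auxiliary event $G_n$ provides — once one item past position $k$ appears within $O(k^{1/2})$ steps, monotonicity and the two-sided control from Lemma~\ref{le:7} pin the ratio between $1$ and a Zipf quotient tending to $1$. Everything else is a union bound and the power-law asymptotics, which is why this can be called a straightforward consequence of Lemma~\ref{le:7}.
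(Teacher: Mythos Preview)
Your proposal is correct and follows essentially the same route as the paper: you introduce the auxiliary event that at least one of $d^\circ_{k+1},\dots,d^\circ_{k+\lceil k^{1/2}\rceil}$ lands on $S_m$ (the paper phrases this via the event $D_k=\{q^{(m)}_{I_k^{(m)}+1}<W_mq^\circ_{k+\lceil k^{1/2}\rceil}\}$, whose complement is exactly your $G_n$ at index $k$), obtain the same $(1-\mu_m)^{\lceil k^{1/2}\rceil}$ bound and union estimate, and combine with Lemma~\ref{le:7} to sandwich the ratio by $q^\circ_{k-\lceil k^{1/2}\rceil+1}/q^\circ_{k+\lceil k^{1/2}\rceil}\to 1$. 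The only cosmetic difference is that you derive the lower bound on $q^{(m)}_{I_k^{(m)}+1}$ by invoking the lower half of $E_n$ at the shifted index $k+\lceil k^{1/2}\rceil$, whereas the paper reads it off directly from $D_k^c$; both are valid and yield the identical inequality.
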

\begin{proof}
Define an event $D_k\eqdef\Big\{  q_{I_k^{(m)}+1}^{(m)} <  W_m q^{\circ}_{k+\lceil k^{1/2}  \rceil }\Big\}$. 
Noting that $\{d_i^{(m)}, i\geq I_k^{(m)}+1  \} \subseteq \{ d_i^\circ,\, i\geq k+1  \}$, we have  $q_{I_k^{(m)}+1}^{(m)}<W_mq_k^{\circ}$, and hence the complement $D_k^c=\Big\{  W_m q^{\circ}_{k+\lceil k^{1/2}  \rceil }<q_{I_k^{(m)}+1}^{(m)}  < W_mq_k^{\circ} \Big\}$. 
Using a similar approach to Lemma~\ref{le:7}, we obtain $\Pr\left[  D_k  \right]=(1-\mu_m)^{\lceil k^{1/2}\rceil}$ and  for all $n>n_1$
\begin{align}
\Pr\left[ \cap_{k\geq n}\mathcal{D}_k^c\right]&=1-\Pr\left[ \cup_{k\geq n}\mathcal{D}_k\right]\geq 1-\sum_{k\geq n}\Pr\left[\mathcal{D}_k\right]\geq 1-\frac{c_1}{n^2}.\nonumber
\end{align}
which,  togehter with~(\ref{ineq:bc}), we obtain
 \begin{align}
 \Pr\bigg [&\bigcap_{k\geq n}  \bigg\{  W_mq_k^{\circ}>q_{I_k^{(m)}+1}^{(m)}>W_m q^{\circ}_{k+\lceil k^{1/2} \rceil },\,\,\,\,\,\,\,\,\,\nonumber
 \\ &W_mq^\circ_{k-\lceil k^{1/2} \rceil  +1}>q_{I_k^{(m)}}^{(m)}>W_mq_k^\circ\bigg\} \bigg ]  >1-\frac{2c_1}{n^2}, \nonumber
 \end{align}
which implies
\begin{align}
\Pr\bigg[\bigcap_{k\geq n}\bigg\{1&\leq \frac{q_{I_k^{(m)}}^{(m)}}{q_{I_k^{(m)}+1}^{(m)}}<\frac{q^\circ_{k-\lceil k^{1/2} \rceil  +1}}{q^{\circ}_{k+\lceil k^{1/2}\rceil}}\bigg\} \bigg]>1-\frac{2c_1}{n^2}. \nonumber
\end{align}
Using $q^\circ_{k-\lceil k^{1/2} \rceil  +1}/q^{\circ}_{k+\lceil k^{1/2}\rceil} \to 1$, we finish the proof.
\end{proof}


Next, we establish the following lemma to estimate the functional relationship between $\Big( \sum_{i=I_k^{(m)}+1}^{\infty}q_i^{(m)} \Big)^{-1}$ and $\left(q_{I_k^{(m)}}^{(m)}\right)^{-1} $
based on Lemma~\ref{le:uB}  and Lemma~\ref{le:7}. 
\begin{lemma}\label{phi_m}
 For
$\epsilon_p>0$, there exists $n_3>\max\{n_0,n_1\}$ and $c_2>0$ such that, for all $n>n_3$, 
\begin{align}\label{ineq:phiphi}
&\Pr\Bigg[\bigcap_{k\geq n} \Bigg\{(1-\epsilon_p)\Bigg( \sum_{i=I_k^{(m)}+1}^{\infty}q_i^{(m)} \Bigg)^{-1}<\tilde \Phi_m\left(\left(q_{I_k^{(m)}}^{(m)}\right)^{-1} \right)\nonumber
\\&\,\,\,\,\,\,\,\,\,\,\,\,\,\,\,<(1+\epsilon_p)\Bigg( \sum_{i=I_k^{(m)}+1}^{\infty}q_i^{(m)} \Bigg)^{-1}\Bigg\}\Bigg ]>1-\frac{c_2}{n^2}, 
\end{align}
where
\begin{align}\label{ineq:tilphi}
\tilde \Phi_m(x)=(W_mc_\circ)^{-1/\alpha_\circ} \mu_m^{-1} x^{1-1/\alpha_\circ} (\alpha_\circ-1),\,\alpha_\circ>1.
\end{align}
\end{lemma}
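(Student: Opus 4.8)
The plan is to evaluate the two quantities $\sum_{i=I_k^{(m)}+1}^{\infty}q_i^{(m)}$ and $q_{I_k^{(m)}}^{(m)}$ explicitly as functions of $k$, up to errors that are negligible \emph{uniformly} over all large $k$, then eliminate $k$ between the two and recognize the result as $\tilde\Phi_m$. First I would note that the items $d_i^{(m)}$ with $i\geq I_k^{(m)}+1$ are exactly the items $d_j^\circ$ with $j\geq k+1$ that are hashed to $S_m$, so that $\sum_{i=I_k^{(m)}+1}^{\infty}q_i^{(m)}=W_m\sum_{j\geq k+1}q_j^\circ X_j^{(m)}=W_m Y_{k+1}^{(m)}$. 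On the event of Lemma~\ref{le:uB} (probability at least $1-c_0/n^2$) we have $\big|Y_{k+1}^{(m)}-\mathbb{E}[Y_{k+1}^{(m)}]\big|<(k+1)^{-(\alpha_\circ-2/3)}$ for every $k\geq n$; since $\alpha_\circ-2/3>\alpha_\circ-1$, while the Zipf tail estimate gives $\mathbb{E}[Y_{k+1}^{(m)}]=\mu_m\sum_{j\geq k+1}q_j^\circ\sim \mu_m c_\circ/((\alpha_\circ-1)k^{\alpha_\circ-1})$, the additive fluctuation is of strictly smaller order than the mean, whence
\begin{align*}
\sum_{i=I_k^{(m)}+1}^{\infty}q_i^{(m)}=\frac{W_m\mu_m c_\circ}{(\alpha_\circ-1)k^{\alpha_\circ-1}}\big(1+o(1)\big)\quad\text{uniformly in }k\geq n.
\end{align*}

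Next I would invoke Lemma~\ref{le:7}: on its event (probability at least $1-c_1/n^2$), $W_m q_k^\circ\leq q_{I_k^{(m)}}^{(m)}<W_m q_{k-\lceil k^{1/2}\rceil+1}^\circ$ for all $k\geq n$. Because $q_k^\circ\sim c_\circ k^{-\alpha_\circ}$ and $q_{k-\lceil k^{1/2}\rceil+1}^\circ/q_k^\circ\to 1$ (as $\lceil k^{1/2}\rceil=o(k)$), this squeeze gives $q_{I_k^{(m)}}^{(m)}=W_m c_\circ k^{-\alpha_\circ}\big(1+o(1)\big)$, i.e. $k=\big(W_m c_\circ\,(q_{I_k^{(m)}}^{(m)})^{-1}\big)^{1/\alpha_\circ}\big(1+o(1)\big)$, again uniformly in $k\geq n$. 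I would then intersect the two events; by a union bound the intersection fails with probability at most $(c_0+c_1)/n^2$, so set $c_2\eqdef c_0+c_1$. On this intersection, substituting the expression for $k$ into the previous display yields
\begin{align*}
\Big(\sum_{i=I_k^{(m)}+1}^{\infty}q_i^{(m)}\Big)^{-1}
&=\frac{\alpha_\circ-1}{W_m\mu_m c_\circ}\,k^{\alpha_\circ-1}\big(1+o(1)\big)\\
&=(\alpha_\circ-1)\,\mu_m^{-1}\,(W_m c_\circ)^{-1/\alpha_\circ}\big((q_{I_k^{(m)}}^{(m)})^{-1}\big)^{1-1/\alpha_\circ}\big(1+o(1)\big),
\end{align*}
which is precisely $\tilde\Phi_m\big((q_{I_k^{(m)}}^{(m)})^{-1}\big)\big(1+o(1)\big)$ with $\tilde\Phi_m$ as in (\ref{ineq:tilphi}) (the powers of $W_m c_\circ$ combine since $-1+(\alpha_\circ-1)/\alpha_\circ=-1/\alpha_\circ$ and $(\alpha_\circ-1)/\alpha_\circ=1-1/\alpha_\circ$). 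Finally, since the $o(1)$ terms are uniform in $k\geq n$ and vanish as $n\to\infty$, for a given $\epsilon_p>0$ I would choose $n_3>\max\{n_0,n_1\}$ large enough that the combined multiplicative error lies in $(1-\epsilon_p,1+\epsilon_p)$ for all $k\geq n\geq n_3$, which is exactly (\ref{ineq:phiphi}).

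The main obstacle is the uniformity in $k$: both building blocks must hold simultaneously for \emph{all} $k\geq n$, which is what the ``$\bigcap_{k\geq n}$'' statements of Lemmas~\ref{le:uB} and~\ref{le:7} supply, together with the quantitative check that the Bernstein fluctuation exponent $\alpha_\circ-2/3$ strictly exceeds the tail-sum exponent $\alpha_\circ-1$, so that the error in $Y_{k+1}^{(m)}$ is genuinely negligible against $\mathbb{E}[Y_{k+1}^{(m)}]$. The remaining work --- composing the two asymptotic relations, tracking the powers of $W_m$, $c_\circ$, $\mu_m$, and absorbing all $o(1)$'s into a single $1\pm\epsilon_p$ factor --- is routine bookkeeping.
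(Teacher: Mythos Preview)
Your proposal is correct and follows essentially the same approach as the paper: identify $\sum_{i=I_k^{(m)}+1}^{\infty}q_i^{(m)}=W_mY_{k+1}^{(m)}$, control it via Lemma~\ref{le:uB} and the tail-mean asymptotic, squeeze $q_{I_k^{(m)}}^{(m)}$ via Lemma~\ref{le:7} and the Zipf asymptotic, combine by a union bound (giving $c_2=c_0+c_1$), eliminate $k$ to recover $\tilde\Phi_m$, and then choose $n_3$ large enough to absorb the uniform $o(1)$ into $1\pm\epsilon_p$. The paper carries out these same steps with explicit two-sided inequalities (its (\ref{ineq:26}), (\ref{ineq:29}), (\ref{ineq:32}), (\ref{tilphi})) rather than your $(1+o(1))$ shorthand, but the content is identical.
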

\begin{proof}
Since $q_k^{\circ}\sim c_\circ/k^{\alpha_\circ}$, there exist constants $\epsilon_k>0,k\geq 1$ satisfying $\lim_{k\rightarrow \infty}\epsilon_k\rightarrow\infty$ such that 
\begin{align}\label{qkqkqk}
q_k^{\circ}>\frac{(1-\epsilon_k)c_\circ}{k^{\alpha_\circ}},\,q^\circ_{k-\lceil k^{1/2} \rceil  +1}<\frac{(1+\epsilon_k)c_\circ}{(k-\lceil k^{1/2} \rceil  +1)^{\alpha_\circ}}.
\end{align} 
Combining~(\ref{knm}) and~(\ref{qkqkqk}), we obtain, for $n>n_1$ 
\begin{align}\label{ineq:26}
\Pr\Bigg[\bigcap_{k\geq n} \Bigg\{\frac{\left(  k-\lceil k^{1/2} \rceil  +1  \right)^{\alpha_\circ}}{(1+\epsilon_k)W_mc_\circ}<\left(q_{I_k^{(m)}}^{(m)}\right)^{-1}\,\,\,\,\,\,\,\,\,\nonumber 
\\<\frac{k^{\alpha_\circ}}{(1-\epsilon_k)W_mc_\circ}\Bigg\} \Bigg]\geq 1-\frac{c_1}{n^2}.
\end{align}
Noting that $\{d_i^{(m)}, i\leq I_k^{(m)}  \} \subseteq \{ d_i^\circ,\, i\leq k  \}$ and $\{d_i^{(m)}, i\geq I_k^{(m)}+1  \} \subseteq \{ d_i^\circ,\, i\geq k+1  \}$, we have
\begin{align}\label{ineq:18}
W_mY_{n+1}^{(m)}=\sum_{i=n+1}^\infty W_mq_i^{\circ}X_i^{(m)}=\sum_{i=I_n^{(m)}+1}^{\infty}q_i^{(m)}. 
\end{align}
Combining~(\ref{ineq:18}) and Lemma~\ref{le:uB}, we obtain, for $n>n_0$, 
\begin{small}
\begin{align}\label{ineq:27}
\Pr\Bigg[&\bigcap_{k\geq n}\Bigg\{\left(W_m\left( \mathbb{E}\left[Y_{k+1}^{(m)}\right]+1/k^{\alpha_\circ-\frac{2}{3}}  \right)\right)^{-1}<\Bigg( \sum_{i=I_k^{(m)}+1}^{\infty}q_i^{(m)}     \Bigg)^{-1}\nonumber
\\&<\left(W_m\left( \mathbb{E}\left[Y_{k+1}^{(m)}\right]-1/k^{\alpha_\circ-\frac{2}{3}}  \right)\right)^{-1}\Bigg\}\Bigg ]\geq 1-\frac{c_0}{n^2}.
\end{align}
\end{small}
\hspace{-0.12cm}Since $\mathbb{E}\big[Y_{k+1}^{(m)} \big]=\mu_m \sum_{i=k+1}^{\infty}q_i^\circ\sim \mu_mc_\circ(\alpha_\circ-1)^{-1}(k+1)^{1-\alpha_\circ}$, 
there exist constants $\epsilon_{1,k}>0$ with $\lim_{k\rightarrow\infty}\epsilon_{1,k}=0$ such that 
\begin{small}
\begin{align}\label{ineq:28}
\frac{(1-\epsilon_{1,k})\mu_mc_\circ}{(\alpha_\circ-1)(k+1)^{\alpha_\circ-1} }<\mathbb{E}[Y_{k+1}^{(m)}]<\frac{(1+\epsilon_{1,k})\mu_mc_\circ}{(\alpha_\circ-1)(k+1)^{\alpha_\circ-1} }.
\end{align}
\end{small} 
\hspace{-0.12cm}Combining~(\ref{ineq:27}) and~(\ref{ineq:28}), we have, there exist constants $\tau_{1,k},$ $\tau_{2,k}$ satisfying $\lim_{k\rightarrow\infty}\tau_{1,k},\tau_{2,k}\rightarrow 1$ such that, for $n> n_0$, 
\begin{align}\label{ineq:29}
&\Pr\Bigg[\bigcap_{k\geq n} \Bigg\{\frac{(\alpha_\circ-1)k^{\alpha_\circ-1}}{\mu_mW_mc_\circ}\tau_{1,k}<\Bigg( \sum_{i=I_k^{(m)}+1}^{\infty}q_i^{(m)}     \Bigg)^{-1}\nonumber
\\&\,\,\,\,\,\,\,\,\,\,\,\,\,\,\,\,<\frac{(\alpha_\circ-1)k^{\alpha_\circ-1}}{\mu_mW_mc_\circ}\tau_{2,k}\Bigg\}\Bigg]>1-\frac{c_0}{n^2},
\end{align}
Combining~(\ref{ineq:tilphi}) and~(\ref{ineq:26}) implies, for $n>\max\{n_0,n_1\}$,
\begin{align}\label{ineq:32}
\Pr \Bigg[\bigcap_{k\geq n} \Bigg\{\frac{(\alpha_\circ-1)k^{\alpha_\circ-1}}{\mu_mW_mc_\circ}\delta_{1,k}
<\tilde \Phi_m\left(\left(q_{I_k^{(m)}}^{(m)}\right)^{-1} \right)\nonumber
\\<\frac{(\alpha_\circ-1)k^{\alpha_\circ-1}}{\mu_mW_mc_\circ}\delta_{2,k}\Bigg\}\Bigg]>1-\frac{c_1}{n^2}
\end{align}
where constants $\delta_{1,k},\,\delta_{2,k}\rightarrow 1$ as $k\rightarrow\infty$.
Using a union bound to~(\ref{ineq:29}) and~(\ref{ineq:32}), we obtain, for $n>\max\{n_0,n_1\}$, 
\begin{align}\label{tilphi}
&\Pr\Bigg[\bigcap_{k\geq n} \Bigg\{\frac{\delta_{1,k}}{\tau_{2,k}}\Bigg( \sum_{i=I_k^{(m)}+1}^{\infty}q_i^{(m)} \Bigg)^{-1}<\tilde \Phi_m\left(\left(q_{I_k^{(m)}}^{(m)}\right)^{-1} \right)\nonumber
\end{align}
\begin{align}
&\,\,\,\,\,\,\,\,\,\,\,\,\,\,\,<\frac{\delta_{2,k}}{\tau_{1,k}}\Bigg( \sum_{i=I_k^{(m)}+1}^{\infty}q_i^{(m)} \Bigg)^{-1}\Bigg\}\Bigg ]>1-\frac{c_0+c_1}{n^2}
\end{align}
Since $\delta_{1,k}/\tau_{2,k}$, $\delta_{2,k}/\tau_{1,k}\rightarrow1$ as $k\rightarrow\infty$, for any $\epsilon_p>0$,  
there exists $n_3>\max\{n_0,n_1\}$ such that, for any $n>n_3$, $1-\epsilon_p<\delta_{1,k}/\tau_{2,k}, \delta_{2,k}/\tau_{1,k}<1+\epsilon_p$, which, together with~(\ref{tilphi}), completes the proof.
\end{proof}

To use Lemma~\ref{le:1}, we define $T_m(x)=\sum_{i=1}^{\infty}\big (1-\big(1-q^{(m)}_{i}\big)^x\big)$, which is equivalent to 
\begin{align}\label{epsi3}
T_m{(x)}=\sum_{i=1}^{\infty}\left (1-(1-W_mq_{i}^{\circ})^x\right )X_i^{(m)}.
\end{align}
We now derive an approximation of $T_{m}(\cdot)$ using  Lemma~\ref{le:tm}. 
On the proof, we first rewrite  random variable $1/W_m=W_{m_1}+W_{m_2}$, where $W_{m_1}=\sum_{i=1}^{n}q_i^\circ X_i^{(m)}$ and $W_{m_2}=\sum_{i=n+1}^{\infty}q_i^\circ X_i^{(m)}$. For $W_{m_1}$, using Lemma~\ref{le:7}, we show that for $n$ large enough, $W_{m_2}\approx c(n)^{1-\alpha_\circ}$ with high probability. For $W_{m_1}$, note that, conditional on $\{H(d_i^{\circ}),1\leq i\leq n\}$,  $W_{m_1}$ is deterministic. Combining these two results, we have,  for $n$ large enough, conditional on $\{H(d_i^{\circ}),1\leq i\leq n\}$, $W_m$ is deterministic. Based on this fact and conditional on $\{H(d_i^{\circ}),1\leq i\leq n\}$,  we apply the Bernstein's inequality~(\ref{bern:bound}) for~(\ref{epsi3}) and obtain the estimation~(\ref{folo}). By noting that the bound in~(\ref{folo}) is independent of the hashing function $H$, unconditional on $\{H(d_i^{\circ}),1\leq i\leq n\}$, we finish the proof. 


\begin{lemma}\label{le:tm}
For any $\epsilon_2>0$, there exists $x_0>0$,  such that for all $x\geq x_0$, 
\begin{align}\label{ineq:tinverse}
&\Pr\bigg[\bigg \{(1-\epsilon_2)\mu_m\Gamma(1-1/\alpha_\circ)\left(c_\circ W_{m}x\right)^{\frac{1}{\alpha_\circ}}<T_m(x)\nonumber
\\<&(1+\epsilon_2)\mu_m\Gamma(1-1/\alpha_\circ)\left(c_\circ W_{m}x\right)^{\frac{1}{\alpha_\circ}}\bigg\}\bigg]>1-\frac{c_3}{x^2},
\end{align}
where $c_3$ is  a positive constant. 
\end{lemma}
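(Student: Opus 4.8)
The plan is to carry out the decomposition already indicated in the sketch. Fix $m$, write $1/W_m = W_{m,1}+W_{m,2}$ with $W_{m,1}=\sum_{i=1}^{n}q_i^{\circ}X_i^{(m)}$ and $W_{m,2}=\sum_{i>n}q_i^{\circ}X_i^{(m)}=Y_{n+1}^{(m)}$, and choose $n=n(x):=\lceil x^{1/(2\alpha_\circ)}\rceil$, so that $n\to\infty$, $n=o(x^{1/\alpha_\circ})$, and, using $\alpha_\circ>1$, $\rho_n:=\sum_{i>n}q_i^{\circ}\sim c_\circ(\alpha_\circ-1)^{-1}n^{1-\alpha_\circ}\to 0$. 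Condition on $\mathcal{F}_n:=\sigma(H(d_i^{\circ}),1\le i\le n)$; by Assumption~\ref{assump:1}, given $\mathcal{F}_n$ the value $W_{m,1}$ is a constant, the variables $(X_i^{(m)})_{i>n}$ are i.i.d.\ Bernoulli$(\mu_m)$ and independent of $\mathcal{F}_n$, and $0\le W_{m,2}\le\rho_n$. Thus, conditionally on $\mathcal{F}_n$, $W_m\in[w^-,w^+]$ deterministically with $w^-=1/(W_{m,1}+\rho_n)$, $w^+=1/W_{m,1}$, so $w^+/w^-=1+W_{m,2}/W_{m,1}\le 1+\rho_n/W_{m,1}$. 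Introduce the $\mathcal{F}_n$-measurable event $\mathcal{G}:=\{W_{m,1}\ge\theta_x\}$, $\theta_x:=q_{k_x}^{\circ}$, $k_x:=\lceil 3\ln x/|\ln(1-\mu_m)|\rceil$; then $\theta_x=\Theta((\ln x)^{-\alpha_\circ})$ and $\rho_n/\theta_x=\Theta(n^{1-\alpha_\circ}(\ln x)^{\alpha_\circ})\to 0$, so on $\mathcal{G}$ one has $w^{\pm}=W_m(1+o(1))$ with a \emph{deterministic} $o(1)$ and $w^{\pm}\le 1/\theta_x=:M_x=\Theta((\ln x)^{\alpha_\circ})$ (also $w^{\pm}\ge 1/2$ for large $x$, since $W_{m,1}+\rho_n\le\sum_i q_i^{\circ}+\rho_n\le 2$). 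Finally, for $x$ so large that $k_x\le n$, the complement $\mathcal{G}^{c}$ forces every one of $d_1^{\circ},\dots,d_{k_x}^{\circ}$ to miss $S_m$ (else $W_{m,1}\ge q_{k_x}^{\circ}=\theta_x$), whence $\Pr[\mathcal{G}^{c}]\le(1-\mu_m)^{k_x}\le x^{-3}$.

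Next I split $T_m(x)=T_{m,1}(x)+T_{m,2}(x)$ following~(\ref{epsi3}), with $T_{m,1}(x)=\sum_{i=1}^{n}(1-(1-W_mq_i^{\circ})^{x})X_i^{(m)}$ and $T_{m,2}(x)=\sum_{i>n}(1-(1-W_mq_i^{\circ})^{x})X_i^{(m)}$. Since $W_mq_i^{\circ}=q_k^{(m)}\le 1$ whenever $X_i^{(m)}=1$, every summand lies in $[0,1]$, so $0\le T_{m,1}(x)\le n=o(x^{1/\alpha_\circ})$, negligible against the $\Theta(x^{1/\alpha_\circ})$-sized $T_m(x)$. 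As $w\mapsto 1-(1-wq_i^{\circ})^{x}$ is nondecreasing, on $\mathcal{G}$ I sandwich $T_{m,2}^{-}(x)\le T_{m,2}(x)\le T_{m,2}^{+}(x)$ with $T_{m,2}^{\pm}(x):=\sum_{i>n}(1-(1-w^{\pm}q_i^{\circ})^{x})X_i^{(m)}$, $w^{\pm}$ being $\mathcal{F}_n$-measurable. Conditionally on $\mathcal{F}_n$, $T_{m,2}^{\pm}(x)$ is a sum of independent $[0,1]$-valued terms with mean $\nu^{\pm}:=\mu_m\sum_{i>n}(1-(1-w^{\pm}q_i^{\circ})^{x})$. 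Using $q_i^{\circ}\sim c_\circ i^{-\alpha_\circ}$, the comparison $1-(1-u)^{x}=(1+o(1))(1-e^{-ux})$ on the contributing indices, the standard estimate $\sum_{i\ge1}(1-e^{-wxc_\circ i^{-\alpha_\circ}})\sim\Gamma(1-1/\alpha_\circ)(c_\circ wx)^{1/\alpha_\circ}$ obtained by an integral comparison (cf.\ Theorem~1.2.1 of~\cite{regularVariation}), which holds uniformly for $w\in[1/2,M_x]$ because $wx\to\infty$ with $\ln(wx)\sim\ln x$ and $M_x$ is subpolynomial, together with $\sum_{i\le n}(\cdots)=O(n)=o(x^{1/\alpha_\circ})$, I get $\nu^{\pm}=(1+o(1))\mu_m\Gamma(1-1/\alpha_\circ)(c_\circ w^{\pm}x)^{1/\alpha_\circ}=(1+o(1))\mu_m\Gamma(1-1/\alpha_\circ)(c_\circ W_mx)^{1/\alpha_\circ}$ with a deterministic $o(1)$; in particular $\nu^{\pm}=\Theta(x^{1/\alpha_\circ})$.

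It remains to concentrate. Apply Lemma~\ref{bern} conditionally on $\mathcal{F}_n$ to $T_{m,2}^{\pm}(x)$ and to $-T_{m,2}^{\pm}(x)$: the summands are bounded by $1$ and $\sum_{i>n}\mathbb{E}[(1-(1-w^{\pm}q_i^{\circ})^{x})^{2}X_i^{(m)}\mid\mathcal{F}_n]\le\nu^{\pm}$, so for any fixed small $\delta>0$ one obtains $\Pr[\,|T_{m,2}^{\pm}(x)-\nu^{\pm}|>\delta\nu^{\pm}\mid\mathcal{F}_n]\le 2\exp(-c_\delta\nu^{\pm})\le 2\exp(-c'x^{1/\alpha_\circ})\le c_3'/x^{2}$ for all large $x$, uniformly over $\mathcal{F}_n$-configurations in $\mathcal{G}$. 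Combining this with $0\le T_{m,1}(x)\le n=o(x^{1/\alpha_\circ})$ and the estimate of $\nu^{\pm}$, on $\mathcal{G}$ and conditionally on $\mathcal{F}_n$ one has, with conditional probability at least $1-c_3'/x^{2}$, the inequality $(1-\epsilon_2)\mu_m\Gamma(1-1/\alpha_\circ)(c_\circ W_mx)^{1/\alpha_\circ}<T_m(x)<(1+\epsilon_2)\mu_m\Gamma(1-1/\alpha_\circ)(c_\circ W_mx)^{1/\alpha_\circ}$, once $\delta$ is chosen small in terms of $\epsilon_2$ and $x$ is large enough that the accumulated deterministic $o(1)$'s fall below $\epsilon_2$. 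Because the lower bound $1-c_3'/x^{2}$ does not depend on the value of $\mathcal{F}_n$, taking expectations over $\mathcal{F}_n$ and using $\Pr[\mathcal{G}^{c}]\le x^{-3}$ yields the claimed unconditional bound with $c_3:=c_3'+1$.

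The main obstacle is that $W_m$ and the Bernoulli variables $(X_i^{(m)})$ driving $T_m(x)$ are the \emph{same} randomness, so $W_m$ cannot be treated as a free parameter while concentrating $T_m(x)$. Conditioning on $\mathcal{F}_n$ freezes $W_{m,1}$ but controls $W_{m,2}$ only deterministically through $\rho_n$; making the induced relative error in $W_m$ vanish forces $n\to\infty$, which conflicts with the requirement $n=o(x^{1/\alpha_\circ})$ needed to discard $T_{m,1}(x)$, and the admissible window $(\ln x)^{O(1)}\ll n\ll x^{1/\alpha_\circ}$ is exactly what reconciles the two. A related subtlety is that the exceptional event $\mathcal{G}^{c}$, where $W_{m,1}$ is atypically small and $\mathcal{F}_n$ fails to pin $W_m$, must be shown to have probability $O(x^{-2})$ rather than merely $o(1)$, in order to reach the $1-c_3/x^{2}$ rate that the subsequent Borel--Cantelli argument requires.
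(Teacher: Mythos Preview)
Your proof is correct and follows essentially the same architecture as the paper's: choose $n=n(x)$ with $n\to\infty$ and $n=o(x^{1/\alpha_\circ})$, condition on $\mathcal{F}_n=\sigma(H(d_i^\circ),i\le n)$ to freeze an $\mathcal{F}_n$-measurable proxy for $W_m$, discard the head sum $T_{m,1}(x)\le n$, and apply Bernstein (Lemma~\ref{bern}) to the tail sum conditionally on $\mathcal{F}_n$, after which the uniform $1-O(x^{-2})$ bound is unconditioned.

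The one genuine technical difference is in how you pin $W_m$. The paper applies a second Bernstein bound (via~(\ref{ineq:36}) and~(\ref{ineq:28})) to concentrate $W_{m,2}$ near its mean $K(n_x)=\mu_mc_\circ(\alpha_\circ-1)^{-1}(n_x+1)^{1-\alpha_\circ}$, and then works with the $\mathcal{F}_n$-measurable surrogates $W_{1\pm\epsilon}=W_{m,1}+(1\pm\epsilon)K(n_x)$. You instead bound $W_{m,2}$ crudely by the deterministic tail mass $\rho_n$ and compensate by introducing the good event $\mathcal{G}=\{W_{m,1}\ge\theta_x\}$, so that on $\mathcal{G}$ the sandwich $W_m\in[w^-,w^+]$ has ratio $1+O(\rho_n/\theta_x)\to 1$; the extra cost is the separate estimate $\Pr[\mathcal{G}^{c}]\le x^{-3}$. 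Your route is a bit more self-contained (it does not invoke Lemma~\ref{le:uB}) and is more explicit about the uniformity in $w$ needed for the asymptotic of $\nu^{\pm}$, a point the paper leaves implicit; the paper's route reuses machinery already in place. Both deliver the same $1-c_3/x^{2}$ conclusion and plug identically into the subsequent Borel--Cantelli step.
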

\begin{proof}
For $n_x=\lfloor x^{\sigma}\rfloor$ with $ \sigma<\alpha^{-1}_0$, we define
 $W_{m_1}=\sum_{i=1}^{n_x}q_i^\circ X_i^{(m)}$ and $W_{m_2}=\sum_{i=n_x+1}^{\infty}q_i^\circ X_i^{(m)}$. 
Recalling the definition of $Y_n^{(m)}$, we have $W_{m_2}=Y_{n_x+1}^{(m)}$, which, together with
(\ref{ineq:36}) and~(\ref{ineq:28}), we obtain, for $\epsilon>0$, there exists a large $x_1>0$ such that for all $x>x_1$  
\begin{align}\label{epsis1}
\Pr\bigg[(1-\epsilon)K(n_x)<W_{m_2}<(1+\epsilon)K(n_x)\bigg]\nonumber
\\>1-2\exp(-n_x^{1/3}v),
\end{align}
where $K(n_x)\eqdef\mu_mc_\circ(\alpha_\circ-1)^{-1}(n_x+1)^{1-\alpha_\circ}$ and $\nu$ is the same constant as in~(\ref{ineq:36}). Noting $W_m=1/(W_{m_1}+W_{m_2})$ and using~(\ref{epsi3}) and~(\ref{epsis1}), we have
\begin{align}\label{epsisisi}
\Pr&\bigg[I_1+I_2<T_m(x)<I_1+I_3 \bigg]>1-2\exp(-n_x^{\frac{1}{3}}v).
\end{align} 
where $I_1=\sum_{i=1}^{n_x}\left (1-\left(1-W_{m}q_{i}^{\circ}\right)^x\right )X_i^{(m)}$, 
\begin{small}
\begin{align}\label{I1I2}
I_3=\sum_{i=n_x+1}^{\infty}\left (1-\left(1-\frac{q_{i}^{\circ}}{W_{m_1}+(1-\epsilon)K(n_x)}\right)^x\right )X_i^{(m)}. 
\end{align}
\end{small}
\hspace{-0.15cm}and $I_2$  is defined by replacing $(1-\epsilon)$ in $I_3$ with $(1+\epsilon)$.

\hspace{-0.121cm}Let  $M_{n_x}\eqdef\left \{H(d_i^\circ)=S_{m_i},1\leq i\leq n_x\right\}$ be an event that the first $n_x$ data items are hashed to servers $S_{m_1},\dots, S_{m_{n_x}}$.
Let $p_i=q_i^{\circ}/(W_{m_1}+(1-\epsilon)K(n_x))$. 
From~(\ref{epsi3}), we have $0<W_mq_i^\circ<1$ if $X_i^{(m)}=1$ and $(1-(1-W_mq_{i}^{\circ})^x )X_i^{(m)}=0$ otherwise. Thus, without changing the expression of $T_m(x)$, we assume $0<W_mq_i^\circ <1$ for all $i\geq 1$.
Note that $p_i\rightarrow W_mq_i^{\circ}$ as $n_x\rightarrow\infty$.
 Then, for $n_x$ large enough and conditional on  $M_{n_x}$, we have
$p_i\sim c_1/i^{\alpha_{\circ}}$ and $0<p_i<1$, where the constant $c_1=c_{\circ}/(W_{m_1}+(1-\epsilon)K(n_x))$.
Similar to the derivation of Lemma~\ref{le:uB},  
applying Lemma~\ref{bern} to $I_3x^{-1/\alpha_\circ}=\sum_{i=n_x+1}^{\infty}\left (1-\left(1-p_i\right)^x\right )x^{-1/\alpha_{\circ}}X_i^{(m)}$ and using Lemma 1 in~\cite{jiantanSig}, we obtain, for $\forall\,\epsilon_1>0$, there exists a large $x_2$ such that for all $x>x_2$,
\begin{align}\label{epsi1}
&\Pr\bigg[(1-\epsilon_1)\mu_m\Gamma(1-1/\alpha_\circ)\left(\frac{c_\circ x}{W_{1-\epsilon}}\right)^{\frac{1}{\alpha_\circ}}<I_3<(1+\epsilon_1)\mu_m\nonumber
\\&\Gamma(1-1/\alpha_\circ)\left(\frac{c_\circ x}{W_{1-\epsilon}}\right)^{\frac{1}{\alpha_\circ}}\,\bigg | \,M_{n_x}\bigg]>1-\frac{c_4}{x^2},
\end{align}
where $W_{1-\epsilon}=W_{m_1}+(1-\epsilon)K(n_x)$ and $c_4$ is a positive constant.
A similar result holds for $I_2$ by replacing $(1-\epsilon)$ and $c_4$ in (\ref{epsi1}) with $(1+\epsilon)$ and $c_5$, respectively. 
Recalling the definition~(\ref{epsi3}) and conditional on the event $M_{n_x}$, we have $I_1\leq n_x$, which, in conjunction with $n_x<x^{\sigma}$ and $\sigma<1/\alpha_{\circ}$,  implies that $\lim_{x\rightarrow\infty}{I_1 /x^{\frac{1}{\alpha_{\circ}}}}=0$. Then, conditional on $M_{n_x}$, we have 
 for $\forall\,\epsilon_2>\epsilon_1$, there exists a sufficiently large $x_3$  such that for all $x>x_3$,
\begin{align}
0<I_1<(\epsilon_2-\epsilon_1)\mu_m\Gamma(1-1/\alpha_\circ)\left(c_\circ x/W_{1-\epsilon}\right)^{\frac{1}{\alpha_\circ}},\nonumber
\end{align}
which, using~(\ref{epsisisi}) and~(\ref{epsi1}) and a union bound,  implies that for all $x>x_4\eqdef\max\{x_1,x_2,x_3\}$,
\begin{align}\label{folo}
&\Pr\bigg[(1-\epsilon_2)\mu_m\Gamma(1-1/\alpha_\circ)\left(\frac{c_\circ x}{W_{1+\epsilon}}\right)^{\frac{1}{\alpha_\circ}}<T_m(x)<(1+\epsilon_2)\nonumber
\\&\mu_m\Gamma(1-1/\alpha_\circ)\left(\frac{c_\circ x}{W_{1-\epsilon}}\right)^{\frac{1}{\alpha_\circ}}\,\bigg | \,M_{n_x}\bigg]>1-\frac{c_3}{x^2},
\end{align}
where $W_{1+\epsilon}=W_{m_1}+(1+\epsilon)K(n_x)$ and $c_3$ is a positive constant related to $c_4$, $c_5$ and $\nu$.
Note that $1-c_3/n^2$ on the right side of~(\ref{folo}) is independent of $M_{n_x}$ and $W_{1+\epsilon},\,W_{1-\epsilon}\rightarrow W_m$ as $x\rightarrow\infty$. Thus,  for $x$ large enough, unconditional on $M_{n_x}$ and  passing $\epsilon_1,\epsilon\rightarrow 0$, we obtain~(\ref{ineq:tinverse}).
\end{proof}

%

Now, we use Lemmas \ref{le:1}, \ref{qiqiplus1}, \ref{phi_m} and \ref{le:tm} to prove Theorem~\ref{theo:1} by applying the Borel-Cantelli lemma.
\begin{proof}[Proof of Theorem~\ref{theo:1}]
 For the two bounds proved in Lemma~\ref{qiqiplus1} and Lemma~\ref{phi_m}, it is easy to verify that 
$ \sum_{n=1}^{\infty} \left(2c_1+c_2\right)/n^2  < \infty$,
 which, by the Borel-Cantelli lemma, implies that,  almost surely for each $H \in \mathcal{H}$, there exists a finite $n_H$,
  such that for all $n>n_H$, we have  $1\leq q_{I_n^{(m)}}^{(m)}/q_{I_n^{(m)}+1}^{(m)} < 1+\epsilon_q$ and $(1-\epsilon_p)\Big( \sum_{i=I_k^{(m)}+1}^{\infty}q_i^{(m)} \Big)^{-1}<\tilde \Phi_m\Big(\Big(q_{I_k^{(m)}}^{(m)}\Big)^{-1} \Big)
<(1+\epsilon_p)\Big( \sum_{i=I_k^{(m)}+1}^{\infty}q_i^{(m)} \Big)^{-1}$, 
  where $\tilde \Phi_m$ is defined in Lemma~\ref{phi_m}.
 
 Now, note that $x$  in Lemma~\ref{le:tm} represents the cache size, which also takes integer values. For the bound in Lemma~\ref{le:tm}, 
 we have
   $\sum_{x=1}^{\infty} c_3/x^2 < \infty$, 
 which, by Borel-Cantelli lemma, implies that, almost surely for each $H \in \mathcal{H}$, there exists a finite $x_H$,
  such that for all $x>x_H$, we have
$(1-\epsilon_2)\mu_m\Gamma(1-1/\alpha_\circ)\left(c_\circ W_{m}x\right)^{\frac{1}{\alpha_\circ}}<T_m(x) <(1+\epsilon_2)\mu_m\Gamma(1-1/\alpha_\circ)\left(c_\circ W_{m}x\right)^{\frac{1}{\alpha_\circ}}$. 

These two facts,  using Lemma~\ref{le:1} and passing $\epsilon_q,\epsilon_p, \epsilon_2 \to 0$,  implies that,   almost surely for all $H$, 
\begin{align}
\Pr\left[C_0>x_m\big | J_0=m, H \right]\sim &\frac{\left(\mu_m\Gamma(1-1/\alpha_\circ)\right)^{\alpha_\circ}c_\circ W_m}{\alpha_\circ x_m^{\alpha_\circ-1}}, \nonumber
\end{align}
which finishes the proof of Theorem~\ref{theo:1}.
\end{proof}

\subsection{Proof of Theorem~\ref{theo:3}}\label{proof_th2}
\begin{proof}
By Theorem~\ref{theo:1}, the miss ratio of  cluster $\mathcal{C}$ satisfies
\begin{align}\label{shengnan3}
\Pr_{miss}^{\,\,\mathcal{C},H}&=\sum_{m=1}^{N}\Pr\left[ C_0 > x_m   {\big |}J_0=m,H\right]\Pr\left[ J_0=m\big | H\right]\nonumber
\\&   \sim  \sum_{i=1}^{N} \frac{\left(\mu_m\Gamma(1-1/\alpha_\circ)\right)^{\alpha_\circ}c_\circ}{\alpha_\circ b_m^{\alpha_\circ-1}x^{\alpha_\circ-1}},  
\end{align}
which holds almost surely for all $H$.
Noting that $q_i^{\circ}\sim c_\circ/i^{\alpha_\circ}$ and using Theorem~3 of~\cite{Jelenkovic99asymptoticapproximation}, we can obtain
$\Pr [ C_0>\bar x   ]\sim c_\circ (\Gamma(1-\alpha_\circ^{-1}) )^{\alpha_\circ}/(\alpha_\circ \bar x^{\alpha_\circ-1})$,
which, in conjunction with~(\ref{ineq:67}),  yields~(\ref{hasheq}).
\end{proof}
\subsection{Proof of Theorem~\ref{approx:che}}\label{proof:theo3}
To prove Lemma~\ref{le:che}, we need the following lemma.
\begin{lemma}\label{le:1.5}
If $p_i\sim c/i^\alpha,\,i\geq 1$, then we have, for  any fixed $n_0\in\mathbb{N}^+$, 
\begin{align}\label{TMX}
\sum_{i=n_0}^{\infty}\left( 1-\left(  1-p_i  \right)^x   \right)\sim (cx)^{1/\alpha}\Gamma(1-1/\alpha).
\end{align} 
\end{lemma}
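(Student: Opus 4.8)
The plan is to deduce the asymptotics in three stages: first replace $p_i$ by an exact power law via an $\epsilon$-sandwich, then compare the resulting sum with an integral using monotonicity, and finally evaluate that integral through the substitution $u=xat^{-\alpha}$, which reduces the whole statement to the classical identity
\[
\int_0^\infty (1-e^{-u})\,u^{-1/\alpha-1}\,du = \alpha\,\Gamma(1-1/\alpha).
\]
Throughout I would use $\alpha>1$ (as in our setting), so that $0<1/\alpha<1$ and $x^{1/\alpha}\to\infty$; note that $\alpha>1$ is exactly what makes $\Gamma(1-1/\alpha)$ a finite positive constant.

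First, since $0\le 1-(1-p_i)^x\le 1$ for $p_i\in(0,1)$, adding or removing finitely many terms changes the sum by $O(1)=o(x^{1/\alpha})$, so the particular value of $n_0$ is immaterial and I may start the sum at any convenient index. Next, fixing $\epsilon\in(0,1)$ and using $p_i\sim c/i^\alpha$, I would choose $N=N_\epsilon$, also large enough that $(1+\epsilon)cN^{-\alpha}<1$, so that $(1-\epsilon)c\,i^{-\alpha}\le p_i\le(1+\epsilon)c\,i^{-\alpha}$ for all $i\ge N$. Since $p\mapsto 1-(1-p)^x$ is increasing on $(0,1)$, this sandwiches $\sum_{i\ge N}(1-(1-p_i)^x)$ between $\sum_{i\ge N}\bigl(1-(1-(1-\epsilon)c\,i^{-\alpha})^x\bigr)$ and the analogous upper sum, so it suffices to handle, for each fixed $a>0$, the model sum $G_a(x)=\sum_{i\ge N}\bigl(1-(1-a\,i^{-\alpha})^x\bigr)$; establishing $G_a(x)\sim (ax)^{1/\alpha}\Gamma(1-1/\alpha)$ and then letting $\epsilon\to0$ (so that the constants $(1\mp\epsilon)^{1/\alpha}\to1$) squeezes out the claim.

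To evaluate $G_a(x)$ I would note that $t\mapsto 1-(1-a\,t^{-\alpha})^x$ is nonnegative, bounded by $1$, and decreasing on $[N,\infty)$, so $G_a(x)$ differs from $\int_N^\infty(1-(1-a\,t^{-\alpha})^x)\,dt$ by at most $1$. Substituting $u=xa\,t^{-\alpha}$ (so $a\,t^{-\alpha}=u/x$ and $t=(xa/u)^{1/\alpha}$) turns this into
\[
\int_N^\infty \bigl(1-(1-a t^{-\alpha})^x\bigr)\,dt \;=\; \frac{(ax)^{1/\alpha}}{\alpha}\int_0^{axN^{-\alpha}} \Bigl(1-(1-u/x)^x\Bigr)\,u^{-1/\alpha-1}\,du .
\]
As $x\to\infty$ the upper limit tends to $\infty$ and $(1-u/x)^x\to e^{-u}$ pointwise; Bernoulli's inequality gives $1-(1-u/x)^x\le u$ on $(0,x)$, so the integrand is dominated by $\min\{1,u\}\,u^{-1/\alpha-1}$, which is integrable on $(0,\infty)$ precisely because the exponent is $-1/\alpha>-1$ near $0$ and $-1/\alpha-1<-1$ near $\infty$. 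Dominated convergence then sends the integral to $\int_0^\infty(1-e^{-u})u^{-1/\alpha-1}\,du$, and one integration by parts (the boundary terms vanishing since $1/\alpha<1$) identifies this with $\alpha\,\Gamma(1-1/\alpha)$, hence $G_a(x)\sim (ax)^{1/\alpha}\Gamma(1-1/\alpha)$ and, by the previous paragraph, the lemma.

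I do not expect a genuine obstacle: this is a standard power-law computation. The two points deserving care are (i) the $\epsilon$-sandwich together with the remark that $(1\mp\epsilon)^{1/\alpha}\to1$, so that the squeeze is legitimate, and (ii) the interchange of limit and integral, for which the uniform bound $1-(1-u/x)^x\le\min\{1,u\}$ supplies the dominating function. An alternative would be to phrase the last two stages via Karamata's Tauberian theorem applied to the counting function $\#\{i:p_i>p\}\sim (c/p)^{1/\alpha}$, but the direct change of variables seems the most transparent and self-contained.
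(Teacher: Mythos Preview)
Your argument is correct and complete; the only assumption you need, $\alpha>1$, is indeed the standing one in the paper.

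The paper's own proof follows a different route. Rather than sandwiching $p_i$ by exact power laws and then passing $(1-u/x)^x\to e^{-u}$ inside the integral via dominated convergence, the paper first replaces $(1-p_i)^x$ by an exponential: the elementary bounds $(1-p)^x\le e^{-px}$ and, for small $p$, $1-p\ge e^{-(1+\delta)p}$ squeeze $\sum_i\bigl(1-(1-p_i)^x\bigr)$ between $\sum_i(1-e^{-p_ix})$ and $\sum_i(1-e^{-(1+\delta)p_ix})$; it then asserts the asymptotic $\sum_i(1-e^{-p_ix})\sim\int_1^\infty(1-e^{-cx/z^\alpha})\,dz\sim (cx)^{1/\alpha}\Gamma(1-1/\alpha)$ and lets $\delta\to0$. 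So the paper does ``exponential approximation first, then integral,'' whereas you do ``power-law sandwich first, then integral, then exponential limit.'' Your version is somewhat more self-contained---the integral comparison, the substitution, the dominating function $\min\{1,u\}\,u^{-1/\alpha-1}$, and the integration by parts are all spelled out---while the paper's version is shorter but leaves the integral evaluation and the sum-to-integral step implicit. Either approach is standard; yours has the minor advantage of making transparent exactly where $\alpha>1$ enters (integrability of the dominating function near both endpoints and vanishing of the boundary terms).
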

\begin{proof}
By the equality $\big (1-p_i\big )^x\leq e^{-p_ix}$, we have 
\begin{align}\label{ineq:ppi}
\sum_{i=1}^{\infty}\left( 1-\left(  1-p_i  \right)^x   \right)\geq\sum_{i=0}^{\infty}\left(1-e^{-q_i^{(m)}x}\right).
\end{align}
For $\forall\,\delta>0$, there exists $m_\delta$ such that $1-x\geq\exp(-(1+\delta)x)$ for $0\leq x\leq m_\delta$. Since $\lim_{i\rightarrow\infty}p_i\rightarrow 0$, then we can choose large $i_0$ such that for any $i>i_0$, $p_i<m_\delta$. Then, we have 
\begin{align}\label{ineq:i0}
\sum_{i=1}^{\infty}\left( 1-\left(  1-p_i  \right)^x   \right)\leq\sum_{i=1}^{i_0}1+\sum_{i=i_0+1}^{\infty}\left(1-\left(1-p_i\right)^x\right)\nonumber
\\\leq i_0+\sum_{i=1}^{\infty}\left(1-e^{-(1+\delta)p_ix}\right).
\end{align}
Since we have 
\begin{align}
\sum_{i=1}^{\infty}\left(1-e^{-p_ix}\right)\sim \int_{1}^{\infty} \left(1-e^{-cx/z^{\alpha}}\right)dz\sim (cx)^{\frac{1}{\alpha}}\Gamma(1-\frac{1}{\alpha}),\nonumber
\end{align}
then, by passing $x\rightarrow\infty$, we obtain
\begin{align}\label{ineq:dede}
\sum_{i=1}^{\infty}\left(1-e^{-(1+\delta)p_ix}\right)/\sum_{i=1}^{\infty}\left(1-e^{-p_ix}\right)\sim (1+\delta)^{1/\alpha}
\end{align}
Combining~(\ref{ineq:ppi}),~(\ref{ineq:i0}) and~(\ref{ineq:dede}) and passing $\delta\rightarrow 0$, we have 
\begin{align}\label{jkjkjk}
\sum_{i=1}^{\infty}\left( 1-\left(  1-p_i  \right)^x   \right)\sim\sum_{i=0}^{\infty}\left(1-e^{-q_i^{(m)}x}\right)\sim (cx)^{\frac{1}{\alpha}}\Gamma(1-\frac{1}{\alpha})
\end{align}
which, in conjunction with that $\sum_{i=1}^{n_0-1}\left( 1-\left(  1-p_i  \right)^x   \right)<n_0-1$, finishes the proof.
\end{proof}

\begin{lemma}\label{le:che}
If $q_i^{\circ}\sim c_\circ/i^{\alpha_\circ}$, then we have, as $\bar x\rightarrow\infty$, 
\begin{align}
\Pr_{\textrm{CT}}\left[ C_0>\bar x \right]\sim \Pr\left[ C_0>\bar x  \right].\nonumber
\end{align}
\end{lemma}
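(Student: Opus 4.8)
The plan is to show that $\Pr_{\textrm{CT}}[C_0>\bar x]$ has the same power-law asymptotics, namely $c_\circ\Gamma(1-1/\alpha_\circ)^{\alpha_\circ}\big/(\alpha_\circ\bar x^{\alpha_\circ-1})$, as the single-cache miss probability $\Pr[C_0>\bar x]$. The latter is already available: since $q_i^{\circ}\sim c_\circ/i^{\alpha_\circ}$ with $\alpha_\circ>1$, Theorem~3 of~\cite{Jelenkovic99asymptoticapproximation} (equivalently Theorem~\ref{theo:1} with $N=1$) gives $\Pr[C_0>\bar x]\sim c_\circ\Gamma(1-1/\alpha_\circ)^{\alpha_\circ}\big/(\alpha_\circ\bar x^{\alpha_\circ-1})$ as $\bar x\to\infty$. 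Hence the whole problem reduces to pinning down the asymptotics of $\Pr_{\textrm{CT}}[C_0>\bar x]=\sum_{i=1}^{\infty}q_i^{\circ}e^{-q_i^{\circ}t_C}$, which in turn requires understanding how the characteristic time $t_C$ grows.

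First I would handle $t_C$. Set $g(t)\eqdef\sum_{i=1}^{\infty}\bigl(1-e^{-q_i^{\circ}t}\bigr)$; this series converges (dominated by $t\sum_i q_i^{\circ}$), and $g$ is continuous, strictly increasing, and unbounded, so the defining equation $g(t_C)=\bar x$ has a unique solution and $t_C\to\infty$ as $\bar x\to\infty$. The argument that proves Lemma~\ref{le:1.5} applies verbatim with $p_i=q_i^{\circ}$ (see~(\ref{jkjkjk}) in that proof) to give $g(t)\sim (c_\circ t)^{1/\alpha_\circ}\Gamma(1-1/\alpha_\circ)$ as $t\to\infty$; inverting this equivalence yields $t_C\sim \bar x^{\alpha_\circ}\big/\bigl(c_\circ\Gamma(1-1/\alpha_\circ)^{\alpha_\circ}\bigr)$.

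Next I would estimate $\Pr_{\textrm{CT}}[C_0>\bar x]$, noting that it equals $g'(t_C)$ with $g'(t)=\sum_{i=1}^{\infty}q_i^{\circ}e^{-q_i^{\circ}t}$ (term-by-term differentiation is justified by uniform convergence on compacts). Each summand $q_i^{\circ}e^{-q_i^{\circ}t}$ is decreasing in $t$, so $g'$ is monotone; combined with $g(t)=\int_0^t g'(s)\,ds\sim c_\circ^{1/\alpha_\circ}\Gamma(1-1/\alpha_\circ)\,t^{1/\alpha_\circ}$, the monotone density theorem (e.g.\ Theorem~1.7.2 of~\cite{regularVariation}) gives $g'(t)\sim \alpha_\circ^{-1}c_\circ^{1/\alpha_\circ}\Gamma(1-1/\alpha_\circ)\,t^{1/\alpha_\circ-1}$. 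Substituting the asymptotics of $t_C$ and collecting the powers of $c_\circ$ and $\Gamma(1-1/\alpha_\circ)$ (they combine to the first power of $c_\circ$ and the $\alpha_\circ$-th power of $\Gamma(1-1/\alpha_\circ)$) collapses the product to $c_\circ\Gamma(1-1/\alpha_\circ)^{\alpha_\circ}\big/(\alpha_\circ\bar x^{\alpha_\circ-1})$, matching $\Pr[C_0>\bar x]$ and finishing the proof. An equivalent route avoiding the monotone density theorem is to sandwich $\sum_i q_i^{\circ}e^{-q_i^{\circ}t_C}$ between integrals of the form $\int_1^{\infty}(c_\circ/z^{\alpha_\circ})e^{-c_\circ t_C/z^{\alpha_\circ}}\,dz$ and substitute $w=c_\circ t_C/z^{\alpha_\circ}$, turning each into $\alpha_\circ^{-1}c_\circ^{1/\alpha_\circ}t_C^{1/\alpha_\circ-1}\int_0^{c_\circ t_C}w^{-1/\alpha_\circ}e^{-w}\,dw$, whose integral converges to $\Gamma(1-1/\alpha_\circ)$ precisely because $\alpha_\circ>1$.

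The main obstacle is purely the rigor of these sum-versus-integral comparisons: $q_i^{\circ}\sim c_\circ/i^{\alpha_\circ}$ is only an asymptotic equivalence, so one must sandwich the tail of each series between $(1\pm\epsilon)c_\circ/i^{\alpha_\circ}$ for $i\ge i_\epsilon$, check that the finite head $\sum_{i<i_\epsilon}q_i^{\circ}e^{-q_i^{\circ}t_C}$ is negligible next to the $\bar x^{-(\alpha_\circ-1)}$ main term (it decays like $e^{-\textrm{const}\cdot t_C}$), and then let $\epsilon\to0$. This is exactly the bookkeeping already carried out inside Lemma~\ref{le:1.5}, which is why invoking that lemma for $g$ together with either the monotone density theorem or a second copy of the same estimate for $g'$ keeps the argument short.
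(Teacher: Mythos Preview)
Your argument is correct, but it proceeds differently from the paper. The paper does not treat $\Pr_{\textrm{CT}}[C_0>\bar x]$ as $g'(t_C)$ and invoke the monotone density theorem; instead it sandwiches $\sum_i q_i^{\circ}e^{-q_i^{\circ}t_C}$ against $\Pr[\sigma>t_C]=\sum_i q_i^{\circ}(1-q_i^{\circ})^{t_C}$ via $1-q\leq e^{-q}$ for the lower bound and a head/tail split (exactly the mechanism of~(\ref{shengnan}) and~(\ref{ineq:76})) for the upper bound, then feeds in the asymptotic $\Pr[\sigma>n]\sim\Gamma(\beta+1)/\Phi(n)$ already obtained inside the proof of Lemma~\ref{le:1}, together with $t_C\sim T^{\leftarrow}(\bar x)$. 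What this buys the paper is reuse of the $\Phi$, $T$ machinery it has already built; what your route buys is independence from that machinery and a one-line derivation of the $g'$ asymptotic from the $g$ asymptotic via a classical Tauberian result. Your alternative integral substitution $w=c_\circ t_C/z^{\alpha_\circ}$ is also perfectly sound and is essentially the direct computation the paper's upper-bound step unwinds to. Either way, both proofs land on the same explicit constant $c_\circ\Gamma(1-1/\alpha_\circ)^{\alpha_\circ}/(\alpha_\circ\bar x^{\alpha_\circ-1})$ and then match it to the known asymptotic of $\Pr[C_0>\bar x]$.
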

\begin{proof}
Recall 
\begin{align}\label{jikaiyi}
\Pr_{\textrm{CT}}\left[ C_0>\bar x \right]=\sum_{i=1}^{\infty}q_i^{\circ}e^{-q_i^{\circ}t_C},
\end{align}
where $t_C$ is the unique solution to the equation $\sum_{i=1}^{\infty}(1-e^{-q_i^{\circ}t_C})=\bar x$. 
Noting $\lim_{i\rightarrow\infty}q_i^\circ/q_{i+1}^\circ=1$, we can let $\epsilon=0$ in Lemma~\ref{le:1}.
 Then, using~(\ref{jkjkjk}) and Lemma~\ref{le:1}, we obtain $t_C\sim T^{\leftarrow}(\bar x)$, which, in conjunction with~(\ref{qiqi}),~(\ref{ineq:main11}),~(\ref{jikaiyi}) and $e^{-q_i^\circ}\geq 1-q_i^\circ$, yields a lower bound of~(\ref{jikaiyi})
 \begin{align}\label{shengnan1}
 \Pr_{\textrm{CT}}\bigg[ C_0>\bar x  \bigg]\geq &\sum_{i=1}^{\infty}q_i^{\circ}(1-q_i^\circ)^{t_C}\nonumber
 \\ =&\Pr[\sigma>t_C]\sim\Gamma(\beta+1)/ \Phi(t_C)\nonumber
 \\\sim&\Gamma(\beta+1)/ \Phi(T^{\leftarrow}(\bar x)).
 \end{align}  
Next, we derive an upper bound of~(\ref{jikaiyi}). Using a similar approach to~(\ref{shengnan}) and~(\ref{ineq:76}), we obtain, for $n_1\in\mathbb{N}^+$ 
\begin{align}
\Pr_{\textrm{CT}}\left[ C_0>\bar x \right]&=\sum_{i=1}^{n_1}q_i^{\circ}e^{-q_i^{\circ}t_C}+\sum_{i=n_1}^{\infty}q_i^{\circ}e^{-q_i^{\circ}t_C}\nonumber
\\&\leq e^{-q_{n_1}^{\circ}t_C}+\sum_{i=n_1}^{\infty}q_i^{\circ}e^{-q_i^{\circ}t_C}\nonumber
\\&\lesssim \Gamma(\beta+1)/ \Phi(t_C)\nonumber
\\&\sim \Gamma(\beta+1)/ \Phi(T^{\leftarrow}(\bar x)),\nonumber
\end{align}
which, in conjunction with~(\ref{shengnan1}), yields,
\begin{align}\label{shengnan2}
 \Pr_{\textrm{CT}}\left[ C_0>\bar x  \right]\sim \Gamma(\beta+1)/ \Phi(T^{\leftarrow}(\bar x)).
\end{align}
Noting $q_i^{\circ}\sim c_\circ/i^{\alpha_\circ}$ and using Lemma~\ref{le:1.5}, we obtain, 
$\Phi(z)\sim (\alpha_\circ-1)c_\circ^{-1/\alpha_\circ}z^{1-1/\alpha}$ and $T(z)\sim (cz)^{1/\alpha_\circ}\Gamma(1-1/\alpha_\circ)$, which, together with~(\ref{shengnan2}), implies $\beta=1-1/\alpha_\circ$ and 
\begin{align}\label{shengnan4}
\Pr_{\textrm{CT}}\left[ C_0>\bar x  \right]\sim\frac{c_\circ \left(\Gamma(1-\alpha_\circ^{-1}) \right)^{\alpha_\circ}}{\alpha_\circ \bar x^{\alpha_\circ-1}}.
\end{align}
Combining~(\ref{shengnan3}) with~(\ref{shengnan4}) finishes the proof.
\end{proof}Combining Lemma~\ref{le:che} and Theorem~\ref{theo:3} yields Theorem~\ref{approx:che}.
\bibliographystyle{IEEEtran}
\bibliography{caching} 

\end{document}